\newcommand{\cmark}{\ding{51}}%
\newcommand{\xmark}{\ding{55}}%
\newcommand{\IFTHENELSE}[3]{\ALC@it\algorithmicif\ #1\
  \algorithmicthen\ #2\
  \algorithmicelse\ #3\
  \ifthenelse{\boolean{ALC@noend}}{}{\algorithmicendif\ } }
\newcommand{\FOREND}[3][default]{\ALC@it\algorithmicfor\ #2\ \algorithmicdo%
  \ALC@com{#1}\ #3}
\newcommand{\E}{\ensuremath{\mathbb{E}}}
\newcommand{\F}{\ensuremath{\mathbb{F}}}
\newcommand{\D}{\ensuremath{\mathbb{D}}}
\newcommand{\T}{{\ensuremath{\D^2}}}
\newcommand{\couple}[2]{{\ensuremath{\langle{}#1,#2\rangle}}}
\newcommand{\paragg}{{\ensuremath{\maltese}}\xspace}
\newcommand{\seqagg}{{\ensuremath{\bigstar}}\xspace}
\newcommand{\HAdd}[2]{{\ensuremath{\text{Add}(#1;#2)}}}
\newcommand{\HMul}[2]{{\ensuremath{\text{Mul}(#1;#2)}}}
\newcommand{\N}{\ensuremath{\mathbb{N}}}
\newcommand{\Z}{\ensuremath{\mathbb{Z}}} 
\newcommand{\Zpz}[1]{\ensuremath{\Z{/#1\Z}}}
\newcommand{\ZN}{\Zpz{N}}
\newcommand{\Zm}{\Zpz{m}}
\newcommand{\bigO}[1]{{\ensuremath{\mathcal{O}\left(#1\right)}}}
\newtheorem{theorem}[algorithm]{Theorem}
\newtheorem{corollary}[algorithm]{Corollary}
\newtheorem{lemma}[algorithm]{Lemma} 
\newtheorem{definition}[algorithm]{Definition}
\newtheorem{proposition}[algorithm]{Proposition}
\newcommand{\proverif}{ProVerif}
\newcommand{\overlongrightarrow}[2]
{\vbox{\offinterlineskip
      \halign{##\cr
              \hfill #1 \hfill\cr
              \hbox to #2{\relax\rightarrowfill}\cr}}\ignorespaces
}
\newcommand{\overlongleftarrow}[2]
{\vbox{\offinterlineskip
      \halign{##\cr
              \hfill #1 \hfill\cr
              \hbox to #2{\relax\leftarrowfill}\cr}}\ignorespaces
}
\newcommand{\mytitle}{Private Multi-party Matrix Multiplication and
 Trust Computations\thanks{This work was partially supported by ``Digital
    trust'' Chair from the University of Auvergne Foundation, by the HPAC project (ANR~11~BS02~013), the ARAMIS project (PIA P3342-146798) and the LabEx PERSYVAL-Lab (ANR-11-LABX-0025).}}
\title{\mytitle}
\author[1]{Jean-Guillaume~Dumas}
\author[2]{Pascal~Lafourcade}
\author[1]{Jean-Baptiste~Orfila}
\author[3]{Maxime~Puys}
\affil[1]{Universit\'e Grenoble Alpes, CNRS, LJK, 700 av. centrale,
  IMAG/CS-40700, 38058 Grenoble cedex 9, France.
\href{mailto:Jean-Guillaume.Dumas@imag.fr,Jean-Baptiste.Orfila,Maxime.Puys@imag.fr}{\{Jean-Guillaume.Dumas,Jean-Baptiste.Orfila\}@imag.fr}
}
\affil[2]{Universit\'e Clermont Auvergne, LIMOS, Campus Universitaire
  des C\'ezeaux, BP 86, 63172 Aubi\`ere Cedex, France. 
\href{mailto:Pascal.Lafourcade@udamail}{Pascal.Lafourcade@udamail}}
\affil[3]{Universit\'e Grenoble Alpes, CNRS, Verimag, 700
  av. centrale, IMAG - CS 40700, 38058 Grenoble cedex 9, France.
  \href{mailto:Maxime.Puys@imag.fr}{Maxime.Puys@imag.fr}}
\begin{document}

\onecolumn\maketitle%

\abstract{ 
  This paper deals with distributed matrix
  multiplication. Each player owns only one row of both
  matrices and wishes to learn about one distinct row of
  the product matrix, without revealing its input to the other
  players. 
  We first improve on a weighted average protocol, in order to 
  securely compute a dot-product with a quadratic volume of
  communications and linear number of rounds. 
  We also propose a protocol with five communication rounds, using
  a Paillier-like underlying homomorphic public key
  cryptosystem, which is secure in the semi-honest model or secure with high
  probability in the malicious adversary model.  Using \proverif{}, a
  cryptographic protocol verification tool, we are able to check the
  security of the protocol and provide a countermeasure for each
  attack found by the tool. We also give a randomization method to
  avoid collusion attacks.  As an application, we show
  that this protocol enables a distributed and secure evaluation of
  trust relationships in a network, for a large class of trust
  evaluation schemes.}

\section{Introduction}
Secure multiparty computations (MPC), introduced by
Yao~\cite{Yao:1982} with the millionaires' problem, has been
intensively studied during the past thirty years. The idea of MPC is
to allow $n$ players to jointly compute a function $f$ using their
private inputs without revealing them. In the end, they only know the
result of the computation and no more information. Depending on
possible corruptions of players, one may prove 
that a protocol may resist against a collusion of many players, or
that it is secure even if attackers try to maliciously modify their
inputs. Mostly any function can be securely
computed~\cite{Ben-Or:1988} and many
tools exist to realize MPC protocols.  They comprise for instance the
use of a Trusted Third Party~\cite{Du:2002}, the use of Shamir's
secret sharing scheme~\cite{Shamir:1979}, or more recently the use of
homomorphic encryption~\cite{Goethals:05}. It is also possible to mix
these techniques~\cite{Damgard:2012}.

Our goal is to apply MPC to the a distributed evaluation of trust, as defined
in~\cite{Josang:2007:PLUU,jgd:2012:pkitrust}. There, confidence is a
combination of degrees of trust, distrust and 
uncertainty between players. Aggregation of trusts between players on
a network is done by a matrix product defined on two monoids (one for
the addition of trust, the other one for multiplication, or
transitivity): each player knows one row of the matrix,
its partial trust on its neighbors, and the network as a whole has to
compute a distributed matrix squaring.  Considering that the trust of
each player for his colleagues is private, at the end of the
computation, nothing but one row of the global trust has to be learned
by each player (\emph{i.e.,} nothing about private inputs
should be revealed to others).  Thus, an MPC protocol to resolve
this problem should combine privacy (nothing is learned but the
output), safety (computation of the function does not reveal anything
about inputs) and efficiency~\cite{Lindell:09}.
First, we need to define a MPC protocol which allows us to efficiently
compute a distributed matrix product with this division of data
between players. The problem is reduced to the computation of a dot
product between vectors $U$ and $V$ such that one player knows $U$ and
$V$ is divided between all players.\\
\noindent {\bf Related Work.} Dot product in the MPC model has been
widely studied~\cite{Du:2001:PCS,Amirbekyan:2007:dotprod,Wang:2009}. 
However, in these papers, assumptions made on data partitions are
different: there, each player owns a complete vector, and the
dot product is computed between two players where; in our setting, trust
evaluation should be done among peers, like certification
authorities. For instance, using a trusted third party or permuting
the coefficients is unrealistic.  
Now, computing a dot product with $n$ players is actually
close to 
the MPWP protocol of~\cite{Dolev:2010:CMT}, computing a mean in a
distributed manner: computing dot products is actually similar to
computing a weighted average where the weights are in the known row,
and the values to be averaged are privately distributed. In MPWP the
total volume of communication for a dot product is 
 \bigO{n^3} with \bigO{n} communication rounds. 
Other generic MPC protocols exist, also evaluating circuits, they
however also require \bigO{n^3} computations and/or communications per
dot-product~\cite{Bendlin:2011:eurocrypt,Damgard:2012}.
\par\noindent {\bf Contributions.} We provide the following results:
\begin{compactitem}
\item A protocol \textit{P-MPWP}, improving on \textit{MPWP}, which
  reduces both the computational cost, by allowing the use of Paillier's
  cryptosystem, and the communication cost, from $\bigO{n^3}$ to $\bigO{n^2}$.
\item An $\bigO{n}$ time and communications protocol \emph{Distributed and Secure
  Dot-Product} ($DSDP_i$) (for $i$ participants) which
allows us to securely compute a dot product $UV$, against a semi-honest
adversary, where one player owns a vector $U$ and where each player
knows one coefficient of $V$.
\item A parallel variant that performs the dot-product computation in
  parallel among the players, limits the total number of rounds. This
  is extended to a \emph{Parallel Distributed and Secure
    Matrix-Multiplication} ($PDSMM_i$) family of protocols.
\item A security analysis of the $DSDP$ protocol using a cryptographic
  protocol verification tool, here 
\proverif{}~\cite{blanchet01,ProVerif}. This tool allows us to
define countermeasures for each found attack:
adapted proofs of knowledge in order to preserve privacy 
and
a \emph{random ring order}, where private inputs are protected
  as in a wiretap code~\cite{Ozarov:1984:wiretapII} and where the players 
take random order in the protocol  to preserve privacy with high
probability, even against a coalition of malicious insiders.
\item Finally, we show how to use these protocols for the computation of
  trust aggregation, where classic addition and multiplication are
  replaced by more generic operations, defined on monoids.
\end{compactitem}

In Section~\ref{sec:background}, we thus first recall some multi-party
computation notions. 
We then introduce in Section~\ref{sec:monoid} the trust model based on
monoids. In Section~\ref{sec:p-mpwp}, we present our quadratic variant
of MPWP and a linear-time protocol in Section~\ref{sec:protocol}. We
then give the associated security proofs and countermeasures in
Section~\ref{sec:security} and present parallelized version in
Section~\ref{sec:matmul}. 
Finally, in Section~\ref{sec:mpcmonoid}, we show how our protocols can
be adapted to perform a private multi-party trust computation in a
network.

\section{Background and Definitions}\label{sec:background}
We use a public-key homomorphic encryption scheme where both addition
and multiplication are considered.  There exist many homomorphic
cryptosystems, see for instance~\cite[\S~3]{Mohassel:2011:seclinalg}
and references therein.  We need the following properties on the
encryption function $E$ (according to the context, we use $E_{PubB}$,
or $E_1$ or just $E$ to denote the encryption function, similarly for
the signature function, $D_1$ or $D_{privB}$): computing several
modular additions, denoted by $\HAdd{c_1}{c_2}$, on ciphered messages and one
modular multiplication, denoted by $\HMul{c}{m}$, between a ciphered message and
a cleartext.  That is, $\forall m_1, m_2 \in \Zm$:
$ \HAdd{E(m_1)}{E(m_2)} = E(m_1 + m_2 \mod m)$ and 
$ \HMul{E(m_1)}{m_2} =  E(m_1m_2\mod m)$.
For instance, Paillier's or Benaloh's
cryptosystems~\cite{Paillier:1999:homomorph,Benaloh94,Fousse:2011:benaloh}
can satisfy these requirements, \emph{via} multiplication in the
ground ring for addition of enciphered messages
($\HAdd{E(m_1)}{E(m_2)}=E(m_1)E(m_2)\mod m$), and \emph{via} exponentiation
for ciphered multiplication ($\HMul{E(m_1)}{m_2}=E(m_1)^{m_2}\mod m$), we
obtain the following homomorphic properties:
\begin{align}
\label{homo:add} E(m_1)E(m_2) &= E(m_1 + m_2 \mod m)\\
\label{homo:mul} E(m_1)^{m_2} &=  E(m_1m_2\mod m)
\end{align}

Since we consider the semantic security of the cryptosystem, we assume
that adversaries are probabilistic polynomial time machines. In MPC,
most represented intruders are the following ones:
\begin{compactitem}
\item \emph{Semi-honest (honest-but-curious) adversaries}: a corrupted player
   follows the protocol specifications, but also tries to gather as many 
  information as possible in order to deduce some private inputs.
\item \emph{Malicious adversaries}: a corrupted player that controls
  the network and stops, forges or listens to messages in
  order to gain information.
\end{compactitem}

\section{\uppercase{Monoids of Trust}}
\label{sec:monoid}
There are several schemes for evaluating the transitive trust in a
network.  Some use a single value representing
the probability that the expected action will happen; the
complementary probability being an uncertainty on the trust.  Others
include the \emph{distrust} degree indicating the probability that the
opposite of the expected action will happen \cite{Guha:2004:PTD}.
More complete schemes can be introduced to evaluate trust: J{\o}sang
introduces the \emph{Subjective Logic} notion which expresses beliefs
about the truth of propositions with degrees of "uncertainty"
in~\cite{Josang:2007:PLUU}. Then the authors of~\cite{Huang:2010:FCT}
applied the associated calculus of trust to public key
infrastructures.  There, trust is represented by a triplet, (trust,
distrust, uncertainty) for the proportion of experiences proved, or
believed, positive; the proportion of experiences {\em proved
  negative}; and the proportion of experiences with unknown character.
As $uncertainty=1-trust-distrust$, it is sufficient to express trust
with two values as \couple{trust}{distrust}.  In
\emph{e.g.}~\cite{Foley:2010:stm}
algorithms are proposed to quantify the trust relationship between two
entities in a network, using transitivity and reachability.
For instance, in~\cite{jgd:2012:pkitrust} the authors use an adapted
power of the adjacency matrix to evaluate the trust using all existing
(finite) trust paths between entities.
We show in the following of this section, that powers of this
adjacency matrix can be  evaluated privately in a distributed manner,
provided than one disposes of an homomorphic cryptosystem satisfying 
the homomorphic Properties~(\ref{homo:add}) and~(\ref{homo:mul}).

\subsection{Aggregation of Trust}
Consider Alice trusting Bob with a certain trust degree, and Bob
trusting Charlie with a certain trust degree. 
The sequential aggregation of trust formalizes a kind of transitivity 
 to help Alice to make a decision about Charlie, that
is based on Bob's opinion.  
In the following, we first consider that the trust values are given as
a pair $\couple{a}{b}\in\T$, for $\D$ a principal ideal ring: %
for three players $P_1$, $P_2$ and $P_3$, where $P_1$ trusts $P_2$
with trust value $\couple{a}{b}\in\T$ and $P_2$ trusts $P_3$ with trust value
$\couple{c}{d}\in\T$ the associated {\em sequential aggregation of trust} is a 
function $\seqagg:\T\times\T\rightarrow\T$, that computes the trust
value over the trust path $P_1\stackrel{\couple{a}{b}}{\rightarrow}
P_2 \stackrel{\couple{c}{d}}{\rightarrow} P_3$ as
$\couple{a}{b}\seqagg\couple{c}{d} =\couple{ac+bd}{ad+bc}$.
Similarly, from Alice to Charlie, there might be several ways to perform a
sequential aggregation (several paths with existing trust
values). Therefore it is also possible to aggregate these parallel
paths with the same measure, in the following way:
for two disjoint paths $P_1\stackrel{\couple{a}{b}}{\rightarrow} P_3$ and $P_1
\stackrel{\couple{c}{d}}{\rightarrow} P_3$, the associated {\em parallel
  aggregation of trust} is a function $\paragg:\T\times\T \rightarrow\T$, that computes 
the resulting trust value as: $\couple{a}{b}\paragg\couple{c}{d}=\couple{a+c-ac}{bd}$.
We prove  the following Lemma.
\begin{lemma}
\couple{a}{b} is
invertible for \paragg~if and only if ($b$ is invertible in \D) and (%
$a=0$ or $a-1$ is invertible).
\end{lemma}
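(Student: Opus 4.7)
The plan is to first identify the identity element of $\maltese$ on $\T$, then translate invertibility into a pair of scalar equations over $\D$, and finally analyze each equation separately.

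First I would check that $\langle 0,1\rangle$ is a two-sided identity for $\maltese$: the second component requires $e_2$ with $b e_2 = b$ for every $b\in\D$, forcing $e_2=1$, and the first requires $a+e_1-ae_1=a$ for every $a\in\D$, forcing $e_1=0$ (using that $\D$ has a $1$ and is a principal ideal ring, in particular that annihilators of $1-a$ for generic $a$ collapse to $0$). Assuming associativity and commutativity of $\maltese$ (an easy verification one can mention in passing), $\langle a,b\rangle$ is invertible if and only if there exists $\langle c,d\rangle\in\T$ with $bd=1$ and $a+c-ac=0$.

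The second equation is exactly the statement that $b$ is invertible in $\D$. For the first equation, rewrite it as $c(a-1)=a$, and split cases: if $a=0$, then $c=0$ works, so no condition on $a-1$ is needed; if $a\neq 0$ and $a-1$ is invertible, then $c=a(a-1)^{-1}$ is the unique solution. This establishes the easy direction.

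The only subtle step is the converse: assuming $c(a-1)=a$ has a solution and $a\neq 0$, I must show that $a-1$ is invertible (not merely that it is a non-zero-divisor). The trick is to subtract $(a-1)$ from both sides: $c(a-1)-(a-1)=a-(a-1)=1$, i.e.\ $(c-1)(a-1)=1$, which exhibits $c-1$ as the inverse of $a-1$. Combining both directions yields the stated equivalence. I expect this last algebraic manipulation to be the only non-routine point; everything else is bookkeeping on the identity and the two coordinates.
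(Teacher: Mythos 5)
Your proof is correct, and the interesting part of it --- the converse --- is genuinely different from the paper's. Both arguments agree on the routine steps: verifying that $\couple{0}{1}$ is the identity, reducing invertibility to the two scalar equations $bd=1$ and $c(a-1)=a$, and exhibiting the explicit inverses $\couple{0}{b^{-1}}$ and $\couple{a(a-1)^{-1}}{b^{-1}}$ for the easy direction. For the converse, however, the paper argues by contradiction that $a-1$ cannot be a zero divisor (if $\lambda(a-1)=0$ with $\lambda\neq 0$, then $\lambda(a-1)c=\lambda a=0$ forces $-\lambda=\lambda(a-1)-\lambda a=0$), and then passes from ``non-zero-divisor'' to ``invertible'' --- a step that is valid in $\Zpz{N}$ or any finite ring but is not automatic in an arbitrary principal ideal ring (think of $2$ in $\Z$). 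Your manipulation $(c-1)(a-1)=c(a-1)-(a-1)=a-a+1=1$ sidesteps this entirely by directly producing $c-1$ as the inverse of $a-1$; it works in any commutative ring with identity and also unifies the $a=0$ case (where $a-1=-1$ is trivially invertible), showing in passing that the disjunction ``$a=0$ or $a-1$ invertible'' in the statement is really just ``$a-1$ invertible.'' The only blemish is the parenthetical justification of $e_1=0$ via annihilators: from $e_1(1-a)=0$ for all $a$ you can simply take $a=0$, and in any case existence of the identity only needs the direct check that $\couple{0}{1}$ works, as the paper does.
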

\begin{proof}
As $\langle{}a+0-a.0$, $b.1\rangle$=\couple{a}{b}, 
$\couple{0}{1}$ is neutral for \paragg.
Then, for $b$ invertible, if $a=0$, then
\couple{0}{b^{-1}} is an inverse for \couple{0}{b}.
Otherwise, for $a-1$ invertible,
$\couple{a(a-1)^{-1}}{b^{-1}}\paragg\couple{a}{b}=$ $\couple{a}{b}\paragg\couple{a(a-1)^{-1}}{b^{-1}}$ $=\couple{a+a(a-1)^{-1}-a^2(a-1)^{-1}}{bb^{-1}}=$ $\couple{0}{1}$.
Similarly, if $\couple{a}{b}\paragg\couple{c}{d}=\couple{0}{1}$,
then $bd=1$ and $b$ is invertible.
 Then also $(a-1)c=a$. Finally if
$a\neq 0$ and $a-1$ is a zero divisor, there exists $\lambda\neq 0$
such that
$\lambda(a-1)=0$, thus $\lambda(a-1)c=0=\lambda{}a$, but then
$\lambda(a-1)-\lambda{}a=-\lambda=0$. 
As this is contradictory, the
only possibilities are $a=0$ or $a-1$ invertible.
\end{proof}

\subsection{Multi-party Private Aggregation}
For $E$ an encryption function, we define the natural morphism
on pairs, so that it can be applied to trust values:  
$%
E(\couple{a}{b})=\couple{E(a)}{E(b)}%
.$%
We can thus extend homomorphic
properties to pairs so that the parallel and sequential aggregation can
then be computed homomorphically, provided that one entry is in clear.
\begin{lemma} With an encryption function $E$, satisfying the
  homomorphic Properties~(\ref{homo:add}) and~(\ref{homo:mul}), we have:
\begin{align*}
\HMul{E\left(\couple{a}{b}\right)}{\couple{c}{d}}&=E\left(\couple{a}{b}\seqagg\couple{c}{d}\right)\\&=\couple{E(a)^cE(b)^d}{E(a)^dE(b)^c}\\
\HAdd{E\left(\couple{a}{b}\right)}{\couple{c}{d}}&=E\left(\couple{a}{b}\paragg\couple{c}{d}\right)\\&=\couple{E(a)E(c)E(a)^{-c}}{E(b)^d}
\end{align*}
Moreover, those two functions can be computed on an enciphered
\couple{a}{b}, provided that \couple{c}{d} is in clear.
\end{lemma}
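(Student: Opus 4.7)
The plan is to verify each identity component by component, substituting the definitions of \seqagg and \paragg and then applying Properties~(\ref{homo:add}) and~(\ref{homo:mul}) one factor at a time. Since $E$ extends to pairs by $E(\couple{x}{y})=\couple{E(x)}{E(y)}$, it is enough in each case to match the two scalar entries of the encrypted aggregated value.

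First I would handle the sequential aggregation, where $\couple{a}{b}\seqagg\couple{c}{d}=\couple{ac+bd}{ad+bc}$. Property~(\ref{homo:mul}) gives $E(a)^c=E(ac)$ and $E(b)^d=E(bd)$, and Property~(\ref{homo:add}) then combines them into $E(a)^cE(b)^d=E(ac+bd)$; the second entry follows by the symmetric calculation with the roles of $c$ and $d$ swapped. For the parallel aggregation $\couple{a}{b}\paragg\couple{c}{d}=\couple{a+c-ac}{bd}$, the second entry is again a single use of Property~(\ref{homo:mul}). For the first entry, Property~(\ref{homo:mul}) yields $E(a)^{-c}=E(-ac)$, and two applications of Property~(\ref{homo:add}) give $E(a)\,E(c)\,E(a)^{-c}=E(a+c-ac)$, as required.

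For the closing \emph{moreover} statement I would observe that the right-hand sides only ever involve $E(a)$, $E(b)$, the scalars $c$ and $d$ (used as exponents or as plaintext inputs to $E$), and the quantity $E(c)$. The latter is not directly part of the enciphered input, but because $E$ is a public-key encryption function, any holder of the public key can compute $E(c)$ from the known cleartext $c$. Hence everything needed can be produced from an enciphered $\couple{a}{b}$ together with a cleartext $\couple{c}{d}$. The only real subtlety in the whole argument is precisely this point about the parallel aggregation needing $E(c)$; once it is recognised as a public operation, the lemma reduces to a routine verification using Properties~(\ref{homo:add}) and~(\ref{homo:mul}).
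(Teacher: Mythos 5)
Your proof is correct and follows essentially the same route as the paper: a component-wise verification using Properties~(\ref{homo:add}) and~(\ref{homo:mul}), together with the observation that $E(c)$ can be computed from the cleartext $c$ using the public key.
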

\begin{proof}
From the homomorphic properties of the encryption functions, we have:
$E(a)^cE(b)^d=E(ac+bd)$,
$E(a)^dE(b)^c=E(ad+bc)$,
$E(a)E(c)E(a)^{-c}=E(a+c+a(-c))$ and
$E(b)^d=E(bd)$.
For the computation, both right hand sides depend only on
ciphered values $E(a)$, $E(b)$, and on clear values $c$ and $d$
($E(c)$ can be computed with the public key, from~$c$). 
\end{proof}

This shows, that in order to compute the aggregation of trust
privately, the first step is to be able to compute dot-products
privately.

\section{\uppercase{From MPWP to P-MPWP}}
\label{sec:p-mpwp}
\subsection{MPWP description}
The \textit{MPWP} protocol~\cite{Dolev:2010:CMT} is used to securely
compute private trust values in an additive reputation system between
$n$ players. Each player $P_i$ (excepted $P_1$, assumed to be the
master player) has a private entry $v_i$, and $P_1$ private entries
are weights $u_i$ associated to others players.  The goal is to
compute a weighted average trust, \emph{i.e.,}
$\sum_{i=2}^nu_i*v_i$. The idea of \textit{MPWP} is the following: the
first player creates a vector $TV$ containing her private entries
ciphered with her own public key using Benaloh's cryptosystem,
\emph{i.e.,} $TV=[E_1(w_2),\ldots,E_1(w_n)]$. Then, $P_1$ also sends a
$(n-1) \times (n-1)$ matrix $M$, with all coefficients initialized to~$1$ and a variable $A=1$.  Once $(M,TV,A)$ received, each player
computes: $ A = A*E_1(u_i)^{v_i}*E_1(z_i)$, where $z_i$ is a random
value generated by $P_i$.  At the end, the first player gets $D_1(A)=
\sum_{i=2}^n u_iv_i + z_i$. Then, the idea is to cut the $z_i$ values
in $n-1$ positive shares such that $z_i = \sum_{j=2}^{n}z_{i,j}$.
Next, each $z_{i,j}$ is ciphered with the public key
of $P_j$, the result is stored into the $i^{th}$ column of $M$, and
$M$ is forwarded to the next player.  In a second phase, players
securely remove the added random values to $A$, from 
$M=(m_{i,j})=(E_j(z_{i,j}) )$: each player $P_j$,
except $P_1$, computes her $PSS_j=\sum_{i=2}^n D_j(m_{i,j})=
\sum_{i=2}^nz_{i,j}$ by deciphering all values contained in the
$j^{th}$ row of $M$; then they send $\gamma_j=E_1(PSS_j)$ to $P_1$,
their $PSS_i$ ciphered with the public key of $P_1$.  At the end,
$P_1$ retrieves the result by computing $Trust = D_1(A) -
\sum_{j=2}^nD_1(\gamma_j)=D_1(A) - \sum_{j=2}^nPSS_j=D_1(A) -
\sum_{j=2}^n\sum_{i=2}^nz_{i,j}=D_1(A) -\sum_{i=2}^n
z_i=\sum_{i=2}^nu_iv_i$.

 \subsection{P-MPWP: A lighter MPWP}
\textit{P-MPWP} is a variant of \textit{MPWP} with two main
differences: first Paillier's
cryptosystem is used instead of Benaloh's, and, second, the overall
communications cost is reduced from $\bigO{n^3}$ to $\bigO{n^2}$ by
sending parts of the matrix only. 
All steps of \textit{P-MPWP} but
those clearly 
identified in the following are common with \textit{MPWP}, including
the players' global settings. 
Since \textit{P-MPWP} is using a cryptosystem where players can have
different modulus, some requirements must be verified in the players'
settings. First of all, a bound $B$ needs to be fixed for the
vectors' private coefficients:
\begin{equation}\label{hyp:bound}
\forall i, 0\leq u_i\leq B, 0\leq v_i\leq B
\end{equation}
With Benaloh, the common modulus $M$ must be greater than the dot product,
thus at most:
\begin{equation}\label{hyp:benaloh}  
  (n-1)B^2<M.
\end{equation} 
Differently, with Paillier, each player $P_i$ has
a different modulus $N_i$.
Then, by following the $MPWP$ protocol steps, at the end of the first
round, $P_1$ obtains $A=\prod_{i=2}^nE_1(u_i)^{v_i}*E_1(z_i)$. In order
to correctly decipher this coefficient, if the players' values, as
well as their random values $z_i$, satisfy the
bound~(\ref{hyp:bound}), her
modulo $N_1$ must be greater than $(n-1)(B^2+B)$. For others players,
there is only one deciphering step, at 
the second round. They received $(n-1)$ shares all
bounded by $B$. Hence, their modulus $N_i$ need only be greater than
$(n-1)B$. These modulus requirements are summarized in the following
 lemma:
\begin{lemma}
Let $n>3$ be the number of players. Under the bound~(\ref{hyp:bound}),
if $\forall i, 0\leq z_i\leq B$ and if also the modulus satisfy
$(n-1)(B^2+B)<N_1~\text{and}~(n-1)B<N_i,~\forall{}i=2, \ldots,n$, then
at the end of \textit{P-MPWP}, $P_1$ obtains $S_n
=\sum_{i=2}^nu_i*v_i$.
\end{lemma}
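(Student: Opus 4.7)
The plan is to trace P-MPWP step by step and check, at each place where a ciphertext is decrypted, that the underlying integer is strictly smaller than the corresponding Paillier modulus, so that the decryption returns the exact integer rather than its residue. Once this is confirmed, the algebraic identity $\sum_{i,j} z_{i,j} = \sum_i z_i$ cancels the masking terms and the dot product falls out.

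For the first round I would iteratively apply the homomorphic properties~(\ref{homo:add}) and~(\ref{homo:mul}), so that after all players have updated the accumulator, $A = E_1\bigl(\sum_{i=2}^n (u_i v_i + z_i) \bmod N_1\bigr)$. The bound~(\ref{hyp:bound}) combined with the hypothesis $0 \le z_i \le B$ gives $\sum_{i=2}^n (u_i v_i + z_i) \le (n-1)(B^2 + B) < N_1$, so $P_1$'s decryption returns the integer $\sum_{i=2}^n (u_i v_i + z_i)$ exactly, without wraparound.

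For the second round I would verify the two remaining decryptions. First, since $z_i$ is split into nonnegative shares $z_{i,j}$ summing to $z_i$, every share satisfies $0 \le z_{i,j} \le z_i \le B$; thus the column sum $PSS_j = \sum_{i=2}^n z_{i,j}$ decrypted by $P_j$ is bounded by $(n-1)B < N_j$, so $P_j$ recovers it exactly. Second, after $P_j$ re-encrypts $PSS_j$ under $P_1$'s public key, $P_1$ homomorphically sums the $\gamma_j$'s and decrypts; the corresponding integer is $\sum_{j=2}^n PSS_j = \sum_{i=2}^n z_i \le (n-1)B < (n-1)(B^2+B) < N_1$, which again avoids any modular reduction. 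Subtracting from $D_1(A)$ then yields $\sum_{i=2}^n u_i v_i = S_n$.

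I do not foresee a deep difficulty: the work is essentially uniform bookkeeping of bounds. The only subtle point, which I would state explicitly at the start of the proof, is the justification that the shares $z_{i,j}$ used to split each $z_i$ are nonnegative and individually at most $B$ (so that the column sums are bounded by $(n-1)B$ rather than by something larger). This is implicit in the MPWP specification, but is precisely what makes the bound $(n-1)B < N_j$ sufficient; once it is in place the three overflow checks above are immediate and the correctness equation follows from the Paillier homomorphism alone.
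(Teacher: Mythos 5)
Your proof is correct and follows essentially the same route as the paper, which justifies the lemma by exactly this bookkeeping: bounding $\sum_{i=2}^n(u_iv_i+z_i)$ by $(n-1)(B^2+B)<N_1$ for the first decryption, and bounding the shares $z_{i,j}\leq z_i\leq B$ so that each $PSS_j\leq (n-1)B<N_j$ for the second round. Your explicit remark that the positivity of the shares is what forces $z_{i,j}\leq B$ is a welcome clarification of a point the paper leaves implicit.
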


Now, the reduction of the communications cost in \textit{P-MPWP}, is
made by removing the exchange of the full $M$ matrix between
players. At the $z_{i,j}$ shares computation, each $P_i$ directly
sends the $j^{th}$ coefficient to  
the $j^{th}$ player instead of storing results in $T$. In the end,
each player $P_i$ receives $(n-1)$ values ciphered with his public key, and
he can compute the $PSS_i$ by deciphering and adding each received
values, exactly as in $MPWP$.
Thus, each player sends only $\bigO{n}$ values,
instead of $\bigO{n^2}$. All remaining steps can be executed as in
\textit{MPWP}. 

Both Paillier's and Benaloh's cryptosystems provides semantic
security, thus the security of \textit{P-MPWP} is not altered. 
Moreover,
since a common bound is fixed \textit{a priori} on private inputs, 
\textit{P-MPWP} security can be reduced to the one in \textit{MPWP}
with the common modulo $M$ between all players~\cite{Michalas2012}. 
Finally, since all exploitable (\emph{i.e.,} clear or ciphered with the
dedicated key) 
information exchanged represents a 
subset of the 
\textit{MPWP} players' knowledge, if one  is able to break
\textit{P-MPWP} privacy, then one is also able to break it in
\textit{MPWP}.

\section{\uppercase{A Linear Dot Product Protocol}}
\label{sec:protocol}
\subsection{Overview with Three Players}
We first present in Figure~\ref{fig:dot3} our $DSDP_3$ protocol
(Distributed and Secure Dot-Product), for $3$ players.  The idea is
that Alice is interested in computing a dimension $3$ dot-product
$S=u^T \cdot v$, between her vector $u$ and a vector $v$ whose coefficients
are owned by different players. The other players send their
coefficients, encrypted, to Alice. Then she homomorphically multiplies
each one of these by her $u_i$ coefficients and masks the obtained
$u_iv_i$ by a random value $r_i$. Then the other players can decrypt
the resulting $u_iv_i+r_i$: with two unknowns $u_i$ and $r_i$ they are
not able to recover $v_i$. Finally the players enter a ring
computation of the overall sum before sending it to Alice. Then only,
Alice removes her random masks to recover the final dot-product. Since
at least two players have added $u_2v_2+u_3v_3$, there is at least two
unknowns for Alice, but a single equation.

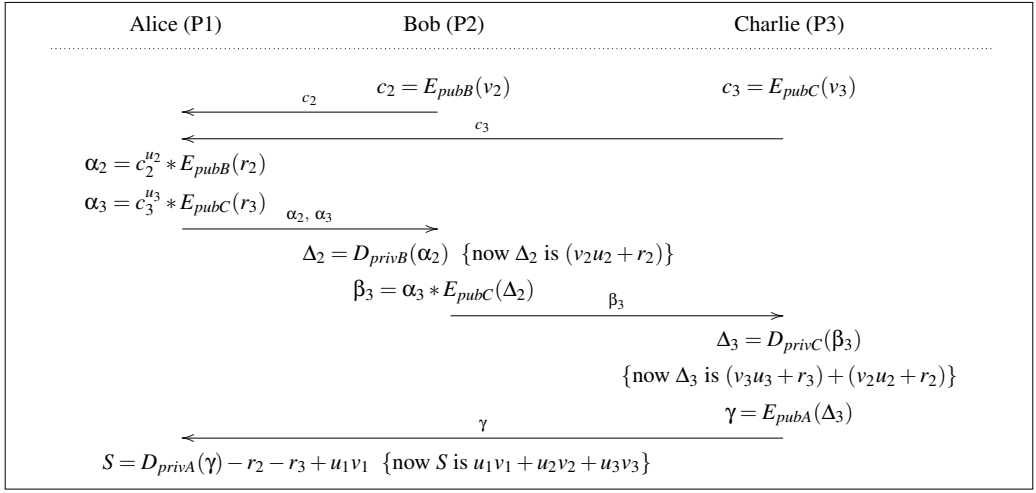
\begin{figure*}[htb]\center
\resizebox{0.85\textwidth}{!}{\framebox{\centerline{
$$\xymatrix@R=0pt@C=12pt{
& \text{Alice (P1)} & & \text{Bob (P2)} & & \text{Charlie (P3)} & \\
\hspace{45pt}\ar@{.}[rrrrrr]\hspace{-25pt}&&&&&&\\
& && &&&&\\
 & && c_2=E_{pubB}(v_2)&& c_3=E_{pubC}(v_3)&\\
& && \ar[ll]_{c_2} &&&\\
& && &&&&\\
& &&&& \ar[llll]_{c_3} &\\
 & \alpha_2=c_2^{u_2} *E_{pubB}(r_2)&&&&&\\
 & \alpha_3=c_3^{u_3} *E_{pubC}(r_3)&&&&&\\
& \ar[rr]^{\alpha_2,~\alpha_3} &&&&&\\
 &&&\hspace{-20pt}\Delta_2=D_{privB}(\alpha_2)~~\{\text{now}~\Delta_2~\text{is}~{(v_2u_2+r_2)}\}\hspace{-60pt}&&&\\
 &&&\beta_3=\alpha_3 *E_{pubC}(\Delta_2) &&&\\
& && \ar[rr]^{\beta_3} &&&\\
&  &&&& \Delta_3=D_{privC}(\beta_3)&\\
 &&&&& \{\text{now}~\Delta_3~\text{is}~{(v_3u_3+r_3)+(v_2u_2+r_2)}\}&\\
 &  &&&& \gamma=E_{pubA}(\Delta_3)&\\
& && &&\ar[llll]_{\gamma} &\\
 &S=D_{privA}(\gamma)-r_2-r_3+u_1v_1 
~~\{\text{now}~S~\text{is}~{u_1v_1+u_2v_2+u_3v_3}\}\hspace{-180pt}&&&&&\\
}$$
}%
}%
}%
\caption{$DSDP_3$: Secure dot product of vectors of size $3$ with a
  Paillier-like asymmetric cipher.}
    \label{fig:dot3}
\end{figure*}

We need that after several decryptions and re-encryptions, and removal
of the random values $r_i$, $S$ is exactly $\sum u_i
v_i$. The homomorphic Properties~(\ref{homo:add})
and~(\ref{homo:mul}) only guaranty that $D( \HAdd{\HMul{E(v_i)}{u_i}}{r_i}
)=v_iu_i+r_i\mod N_i$, for the modulo $N_i$ of the cryptosystem used
by player $P_i$. But then these values must be re-encrypted with
another player's cryptosystem, potentially with another
modulo. Finally Alice also must be able to remove the random values
and recover $S$ over $\Z$.  On the one hand, if players can share the
same modulo $M=N_i$ for the homomorphic properties then decryptions
and re-encryptions are naturally compatible.  This is possible for
instance in Benaloh's cipher.  On the other hand, in a Paillier-like
cipher, at the end of the protocol, Alice will actually recover
$S_4=((u_2v_2+r_2)\mod N_2 +u_3v_3+r_3) \mod N_3$.  He can remove
$r_3$, \emph{via} $S_3=S_4-r_3 \mod N_3$, but then
$S_3=((u_2v_2+r_2)\mod N_2 +u_3v_3) \mod N_3$. Now, if vectors
coefficients are bounded by say $B$, and if the third modulo is larger
than the second, $N_3>N_2+B^2$, the obtained value is actually the
exact value over the naturals: $S_3=(u_2v_2+r_2)\mod N_2+u_3v_3$. Then
Alice can remove the second random value, this time modulo $N_2$:
$S_2=(u_2v_2+u_3v_3)\mod N_2$, where now $N_2>2B^2$ suffices to
recover $S=S_2\in\N$.  We generalize this in the following section.

\subsection{General Protocol with $n$ Players}

We give the generalization $DSDP_n$, of the protocol of
Figure~\ref{fig:dot3} for $n$ players  in Algorithm~\ref{tn_protocol}
hereafter. 
For this protocol to be correct, we use the previously defined
bound~(\ref{hyp:bound}) on the players' private inputs.
\begin{algorithm}[htb]  
 \caption{$DSDP_n$ Protocol: Distributed and Secure Dot-Product of size~$n$}\label{tn_protocol}  
  \begin{algorithmic}[1]
    \REQUIRE $n\geq 3$ players, two vectors $U$ and $V$ such that
    $P_1$ knows complete vector $U$, and each players $P_i$
    knows component $v_i$ of $V$, for $i=1 \ldots n$;
    \REQUIRE $E_i$ (resp. $D_i$), encryption (resp. decryption)
    function of $P_i$, for $i=2 \ldots n$. 
    \ENSURE $P_1$ knows the dot-product $S=U^TV$. 
    \FOREND{$i=2 \ldots n$}{$\{P_{i} : c_{i}=E_i(v_i)$; $P_{i} \overset{c_i} \rightarrow P_{1}\}$}
    \FOR{$i=2 \ldots n$}
      \STATE $P_1: r_i \overset{\$}{\leftarrow}\Zpz{N_i}$
      \STATE $  P_{1} : \alpha_i = c_{i}^{u_i} * E_i(r_{i})$ so that $\alpha_i = E_i(u_iv_i+r_{i})$
    \ENDFOR
    \STATE $P_{1} \overset{\alpha_2}  \rightarrow P_{2}$ 
    \FOREND{$i=2 \ldots n-1$}{$  P_{1} : \overset{\alpha_{i+1}}  \rightarrow P_{i}$}
    \STATE $P_2$ : $\Delta_2 = D_2(\alpha_2) $ so that $\Delta_2= u_2v_2 + r_2$
    \STATE $P_2$ : $\beta_3 = \alpha_3 * E_3(\Delta_2) $ so that $\beta_3 = E_3(u_3v_3+r_3+\Delta_2)$; $P_2$ $\overset{\beta_3}  \rightarrow P_{3}$
    \FOR{$i=3 \ldots n-1$}
      \STATE\label{prot:delta} $P_i$ : $\Delta_i=
      D_i(\beta_{i})$ so that $\Delta_i=\sum_{k=2}^{i}  u_kv_k + r_k$ 
      \STATE\label{prot:beta} $P_i$ : $\beta_{i+1} = \alpha_{i+1} * E_{i+1}(\Delta_i) $ so that $\beta_{i+1} = E_{i+1}(u_{i+1}v_{i+1}+r_{i+1}+\Delta_i)$; $P_i$ $\overset{\beta_{i+1}}  \rightarrow P_{i+1}$
    \ENDFOR
    \STATE $P_n$ : $\Delta_n = D_{n}(\beta_{n})$; $P_n$ : $\gamma = E_{1}(\Delta_{n})$; $P_n \overset{\gamma}  \rightarrow P_{1}$
    \RETURN\label{prot:return} $P_{1} : S = D_1(\gamma) - \sum_{i=1}^{n-1} r_{i}  + u_1v_1$.
\end{algorithmic}  
\end{algorithm} 
Then, for $n$ players, there are two general cases:
First, if all the players share the same
  modulo $M=N_i$ for all~$i$ for the homomorphic properties, 
  then Alice can also use $M$ to remove the $r_i$. Then,
  to compute the correct value $S$, it is sufficient to satisfy the
  bound~(\ref{hyp:benaloh}).
Second, for a Paillier-like cipher, differently, the modulo of the
  homomorphic properties are distinct. We thus prove the following
  Lemma~\ref{lem:paillier}.
\begin{lemma}\label{lem:paillier}
Under the bound (\ref{hyp:bound}), and for any $r_i$,
let $M_2=(u_2v_2+r_2)\mod N_2$ and $M_{i}=(M_{i-1}+u_iv_i+r_i)\mod N_i$,
for $i=2 \ldots n-1$. Let also $S_{n+1}=M_n$ and $S_i=(S_{i+1}-r_i)\mod N_i$
for $i=n \ldots 2$. 
If we have:
\begin{equation}\label{hyp:paillier}
\begin{cases}
N_{i-1}+(n-i+1)B^2<N_{i},&~\text{for all}~i=3..n\\
(n-1)B^2<N_2
\end{cases}
\end{equation}
then $S_2=\sum_{i=2}^n u_iv_i\in\N$.
\end{lemma}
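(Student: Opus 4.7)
The plan is to prove by backward induction on $j$, from $j = n$ down to $j = 3$, the invariant
\[
  S_j \;=\; M_{j-1} + \sum_{k=j}^{n} u_k v_k,
\]
interpreted as an equality of non-negative integers (not merely modulo $N_j$), and then to handle $S_2$ as a separate final step. The key observation is that although the various $\!\mod N_i$ operations use different and potentially incompatible moduli, hypothesis~(\ref{hyp:paillier}) is exactly what is needed to guarantee that at each stage the target value already lies in $[0,N_j)$, so every modular reduction is a no-op on $\N$ and the congruences lift to integer equalities.

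For the base case $j=n$, I would unfold $S_n = (M_n - r_n) \bmod N_n$ and use $M_n \equiv M_{n-1} + u_n v_n + r_n \pmod{N_n}$ to obtain $S_n \equiv M_{n-1} + u_n v_n \pmod{N_n}$; then $M_{n-1} < N_{n-1}$ and $u_n v_n \leq B^2$ (via~(\ref{hyp:bound})), combined with the $i=n$ instance $N_{n-1}+B^2 < N_n$ of~(\ref{hyp:paillier}), place $M_{n-1}+u_n v_n$ in $[0,N_n)$ and yield the integer equality. For the inductive step (passing from $j+1$ to $j$, with $3 \leq j \leq n-1$), assume the invariant at $j+1$; then $S_j = (S_{j+1} - r_j) \bmod N_j$, and substituting together with $M_j \equiv M_{j-1} + u_j v_j + r_j \pmod{N_j}$ yields $S_j \equiv M_{j-1} + \sum_{k=j}^{n} u_k v_k \pmod{N_j}$. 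The right-hand side is bounded above by $N_{j-1} + (n-j+1)B^2$, which is precisely what the $i=j$ case of~(\ref{hyp:paillier}) forces below $N_j$, so the congruence becomes equality in $\N$.

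Finally, for $j=2$, starting from $S_3 = M_2 + \sum_{k=3}^{n} u_k v_k$ and using $M_2 \equiv u_2 v_2 + r_2 \pmod{N_2}$, one more round gives $S_2 \equiv \sum_{k=2}^{n} u_k v_k \pmod{N_2}$, and the separate hypothesis $(n-1) B^2 < N_2$ closes the argument. The main obstacle is really only careful bookkeeping of indices: the recurrence for $M_i$ is stated in the lemma for $i \leq n-1$ but must be used up to $M_n$, and each instance of~(\ref{hyp:paillier}) has to be matched with the correct step of the induction. Once this is pinned down, the content of the lemma reduces to the single observation that keeping every intermediate plaintext strictly below its modulus makes every wrap-around vanish.
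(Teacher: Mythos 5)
Your proposal is correct and follows essentially the same route as the paper: a backward induction establishing $S_j = M_{j-1} + \sum_{k=j}^{n} u_k v_k$ over $\N$ for $j=n$ down to $3$, with hypothesis~(\ref{hyp:paillier}) used at each step to show the intermediate value lies below the current modulus so the reduction is trivial, followed by the separate $(n-1)B^2 < N_2$ bound for the final step. Your remark about the recurrence for $M_i$ needing to extend to $i=n$ is an accurate reading of a small indexing slip in the statement, which the paper's own proof also glosses over.
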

\begin{proof}
By induction, we first show that $S_i=M_{i-1}+\sum_{j=i}^n u_jv_j$, for $i=n..3$:
indeed $S_n=(M_n-r_n)\mod N_n=(M_{n-1}+u_nv_n)\mod N_n$.
But $M_{n-1}$ is modulo $N_{n-1}$, so
$(M_{n-1}+u_nv_n)<N_{n-1}+B^2$, and then~(\ref{hyp:paillier})
for $i=n$, ensures that $N_{n-1}+B^2<N_n$ and
$S_n=M_{n-1}+u_nv_n\in\N$.
Then, for $3\leq i<n$, $S_i=(S_{i+1}-r_i)\mod N_i=(M_i+\sum_{j=i+1}^n
u_jv_j-r_i)\mod N_i=(M_{i-1}+u_iv_i+r_i+\sum_{j=i+1}^n u_jv_j-r_i)\mod
N_i=(M_{i-1}+\sum_{j=i}^n u_jv_j)\mod N_i$, by induction.
But~(\ref{hyp:bound}) enforces that $M_{i-1}+\sum_{j=i}^n
u_jv_j<N_{i-1}+(n-i+1)B^2$ and~(\ref{hyp:paillier})
also ensures the latter is lower than $N_i$. Therefore
$S_i= M_{i-1}+\sum_{j=i}^n u_jv_j$ and the induction is proven.
Finally, $S_2=(S_3-r_2)\mod N_2=(M_2+\sum_{j=3}^n u_jv_j-r_2)\mod
N_2=(\sum_{j=2}^n u_jv_j)\mod N_2$. 
As $\sum_{j=2}^n u_jv_j<(n-1)B^2$, by~(\ref{hyp:paillier}) for $i=2$,
we have 
$S_2=\sum_{j=2}^n u_jv_j\in\N$.
\end{proof}
This shows that the $DSDP_n$ protocol of Algorithm~\ref{tn_protocol} can be
implemented with a Paillier-like underlying cryptosystem, provided
that the successive players have increasing modulo for their
public keys.
\begin{theorem} Under the bounds~(\ref{hyp:bound}),
and under Hypothesis~(\ref{hyp:benaloh}) with a shared modulus underlying
cipher, or under Hypothesis~(\ref{hyp:paillier}) with a Paillier-like
underlying cipher, the $DSDP_n$ protocol of Algorithm~\ref{tn_protocol} is
correct.
It requires $\bigO{n}$ communications
and $\bigO{n}$ encryption and decryption operations.
\end{theorem}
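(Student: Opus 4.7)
The plan is to split the argument into correctness, treated independently in the two hypothesis regimes, followed by a straightforward message/operation count. The correctness claim reduces to showing that, after the ring traversal through $P_2,\ldots,P_n$, the quantity $D_1(\gamma)$ equals $\sum_{k=2}^n u_k v_k + \sum_{k=2}^n r_k$ as an integer, so that line~\ref{prot:return} recovers $S=\sum_{k=1}^n u_k v_k$ (reading the sum in line~\ref{prot:return} as the $n-1$ masks $r_2,\ldots,r_n$ drawn by $P_1$).

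In the shared-modulus regime, I would perform an induction on $i$ to establish that $\Delta_i \equiv \sum_{k=2}^i(u_k v_k + r_k) \pmod M$: the base case for $\Delta_2$ follows from the homomorphic Properties~(\ref{homo:add}) and~(\ref{homo:mul}) applied to $\alpha_2 = c_2^{u_2}\,E_2(r_2)$, and the inductive step is obtained by decrypting $\beta_i$ at line~\ref{prot:delta} together with the definition of $\beta_{i+1}$ at line~\ref{prot:beta}. Since every cryptographic operation lives in the same ring $\ZM$, the masks $r_k$ and their opposites cancel modulo~$M$ irrespective of their magnitudes, and the bound~(\ref{hyp:benaloh}) guarantees $\sum_{k=2}^n u_kv_k<M$ so the final congruence lifts to an equality in $\N$.

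In the Paillier-like regime the work is already packaged in Lemma~\ref{lem:paillier}: one identifies $\Delta_i$ with $M_i$ of that lemma, observes that $D_1(\gamma)=\Delta_n=M_n$ holds in $\N$ since $M_n<N_1$ by~(\ref{hyp:paillier}) at $i=n$, and reads off $S_2=\sum_{k=2}^n u_k v_k\in\N$ from the lemma's conclusion; adding $u_1v_1$ in line~\ref{prot:return} completes the recovery. The only genuine subtlety of the whole theorem lives in this step, namely reconciling decryptions and re-encryptions across distinct moduli $N_i$, and this is precisely what Lemma~\ref{lem:paillier} has already resolved through the increasing-modulus hypothesis~(\ref{hyp:paillier}).

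For the complexity, I would tally messages and cryptographic calls phase by phase: the initial round contributes $n-1$ ciphertexts $c_i$ sent to $P_1$; $P_1$ then forwards $n-1$ masked ciphertexts $\alpha_i$; the ring phase consists of $n-2$ messages $\beta_{i+1}$ between consecutive players; and one final message $\gamma$ closes the loop back to $P_1$, for a total of $3n-3=\bigO{n}$ transmissions. Each participant performs only a constant number of encryptions, decryptions, homomorphic additions and exponentiations, so the total number of cryptographic operations of each type is also $\bigO{n}$.
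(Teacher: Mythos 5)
Your proof follows essentially the same route as the paper's: an induction establishing $\Delta_i=\sum_{k=2}^i(u_kv_k+r_k)$, followed by a phase-by-phase count of messages and cryptographic operations, and it is correct. You are in fact somewhat more explicit than the paper's own proof, which does not separate the shared-modulus and Paillier-like regimes inside the theorem's proof but defers to the preceding discussion and Lemma~\ref{lem:paillier}; the one nit is that your justification of $D_1(\gamma)=M_n$ ``by~(\ref{hyp:paillier}) at $i=n$'' misattributes the inequality, since Hypothesis~(\ref{hyp:paillier}) constrains $N_2,\ldots,N_n$ but says nothing about $N_1$ --- an assumption the paper likewise leaves implicit.
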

\begin{proof}
  First, each player sends his
  ciphered entry to $P_1$, then homomorphically added to random values, $r_i$.  
  Then, $P_i$ ($i \ge 2$) deciphers the
  message received by $P_{i-1}$ into $\Delta_i$. 
  By induction, we obtain $\Delta_i = \sum_{k=2}^{i} u_kv_k + r_k $. 
  This value is then re-enciphered with next player's key
  and the next player share is homomorphically added. 
  Finally, $P_1$ just has to remove all
  the added randomness to obtain  $S = \Delta_n - \sum_{i=2}^{n}
  r_{i} + u_1v_1= \sum_{i=1}^{n}u_iv_i $.
For the complexity, the protocol needs $n-1$
encryptions and communications for the $c_i$; $2(n-1)$ homomorphic
operations on ciphers and $n-1$ communications for the $\alpha_i$;
$n-1$ decryptions for the $\Delta_i$; $n-1$ encryptions, homomorphic
operations and communications for the $\beta_i$; and finally one
encryption and one communication for $\gamma$. Then $P_1$ needs
$\bigO{n}$ operations to recover~$S$.
\end{proof}

\section{\uppercase{Security of $DSDP$} }
\label{sec:security}

 We study the security of $DSDP_n$ using both mathematical proofs and
 automated verifications.  We first demonstrate the security of the
 protocol for {\em semi-honest} adversaries.  Then we incrementally
 build its security helped by attacks found by \proverif{}, an automatic
 verification tool for cryptographic protocols. 

\subsection{Security Proofs}\label{sec:proofs}

The standard security definition in MPC models~\cite{Lindell:09}
covers actually many security issues, such as correctness, inputs
independence, privacy, etc. 
We first prove that under this settings, computation of the dot product
is safe.
\begin{lemma}
For $n \ge 3$, the output obtained after computing a dot product where
one player owns complete vector $U$, and where each coefficient $v_i$ of
the second vector $V$ is owned by the player $P_i$, is
safe.
\end{lemma}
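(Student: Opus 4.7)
The plan is to unpack what ``safe'' means in this setting: according to the definitions recalled in Section~\ref{sec:background}, safety of the output requires that knowledge of the function value $S = U^T V$ does not enable any participant to reconstruct a private input. Since by construction of the protocol only $P_1$ learns $S$, I only need to argue from $P_1$'s point of view. The other players receive no output value at all (just ciphertexts masked with the $r_i$, and intermediate $\Delta_i$'s that homomorphically combine ciphered and randomized sums), so the claim for them will follow trivially from the fact that no output is delivered.

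The core of the argument is therefore a dimension count. First I would write $S=\sum_{i=1}^n u_i v_i$, and note that $P_1$ already knows $u_1,\dots,u_n$ as well as her own coefficient $v_1$. From $S$ she can thus form
\begin{equation*}
T \;=\; S - u_1 v_1 \;=\; \sum_{i=2}^{n} u_i v_i,
\end{equation*}
which is a single linear equation in the $n-1$ unknowns $v_2,\dots,v_n$. For $n\geq 3$ this gives at least two unknowns for one equation, so the affine solution set has at least one degree of freedom. Hence for any candidate value $v_2^\star$ of $v_2$ (in the admissible range $[0,B]$, provided the $u_i$ involved are not all zero, which is the only degenerate case and can be ruled out or handled separately), there exist compatible choices of $v_3,\dots,v_n$ summing to the same $T$. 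Thus $P_1$ cannot, from the output alone, single out any particular $v_i$ with $i\geq 2$.

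I would then conclude by observing that this is precisely the notion of safety used in the paper: the function's output is consistent with many distinct input tuples, so observing $S$ reveals no more than what is unavoidable from the definition of the dot-product itself. The main obstacle is not technical but expository: one must be careful to separate this statement (safety of the \emph{output}) from the stronger statement (privacy of the \emph{protocol transcript}) that is addressed in the subsequent security analysis, and to flag the degenerate case where all but one $u_i$ vanish, for which the dot product trivially leaks a single $v_i$ and the lemma must be understood ``up to'' this unavoidable leakage inherent to the functionality.
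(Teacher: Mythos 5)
Your proposal is correct and follows essentially the same argument as the paper: $P_1$ obtains a single linear equation in the $n-1\geq 2$ unknowns $v_2,\dots,v_n$, so no individual coefficient can be deduced. Your additional remarks (the other players receive no output, and the degenerate case where all but one $u_i$ vanish) are careful refinements the paper leaves implicit, but the core reasoning is identical.
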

\begin{proof}
After executing $DSDP_n$ with $n \ge 3$, $P_1$ received the dot
product of $U$ and $V$. Therefore, it owns only one equation
containing $(n-1)$ unknown values (coefficients from $v_2$ to
$v_n$). Then, he cannot deduce other players' private inputs.
\end{proof}

Then, proving the security relies on a comparison
between a real-world protocol execution and an ideal one. The latter
involves an hypothetical trusted third party ($TTP$) which, knowing
only the players' private inputs, returns the correct result to the correct
players. The protocol is considered secure if the players'
views in the ideal case cannot be distinguished from the real
ones. Views of a player $P_i$ (denoted $View_{P_i}$) are defined as
distributions containing: the players' 
inputs (including random values), the messages received during a protocol
execution and the outputs.
 The construction of the corrupted players' view in the
ideal world is made by an algorithm called $Simulator$.

\begin{definition}
In the presence of a set $C$ of semi-honest adversaries
with inputs set $X_C$, a protocol $\Pi$ securely 
computes $f: ([0,1]^*)^m \rightarrow
([0,1]^*)^m$ (and $f_C$ denotes the outputs of $f$ for each
adversaries in $C$)  if there exists a
probabilistic polynomial-time algorithm $Sim$, such that:
$\{Sim(C,\{X_C\}, f_C(X))\}_{X \in ([0,1]^*)^m}$ is
computationally indistinguishable from $\{C,\{View_{P_i}^\Pi\}_{P_i\in C}\}$.
\end{definition}
For $DSDP_n$, it is secure only if $C$ is reduced to a singleton,
\textit{i.e.} if only one player is corrupted. 
\begin{lemma}
By assuming the semantic security of the cryptosystem $E$, for
$n \ge 3$, $DSPD_n$ is secure against one semi-honest adversary.  
\end{lemma}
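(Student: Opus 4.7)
The security definition requires, for each choice of the single corrupted player $P_c$, a probabilistic polynomial-time simulator $\mathrm{Sim}_c$ taking $v_c$ (and $P_c$'s output, if any) as input and producing a view computationally indistinguishable from the real one. I would split the proof into three cases according to whether $c=1$, $2\le c\le n-1$, or $c=n$, and reduce each indistinguishability claim to the semantic security of the underlying $E_i$ via a standard hybrid argument.

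\textbf{Case $c=1$.} Here $P_1$ holds $(u_1,\ldots,u_n,v_1)$, draws the masks $r_2,\ldots,r_n$ itself, and ultimately obtains the output $S=\sum_{i=1}^n u_i v_i$. The only messages $P_1$ actually receives are the $n-1$ ciphertexts $c_i=E_i(v_i)$ and the final $\gamma=E_1(\Delta_n)$. The simulator replaces each $c_i$ by a fresh $E_i(0)$; hybriding over $i=2,\ldots,n$ reduces distinguishability to the semantic security of each $E_i$. By correctness $\Delta_n = S - u_1 v_1 + \sum_{i=2}^n r_i$, so $\mathrm{Sim}_1$ outputs $\gamma = E_1\bigl(S - u_1 v_1 + \sum_{i=2}^n r_i\bigr)$ with fresh randomness, which has the same distribution as the real $\gamma$.

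\textbf{Case $2\le c\le n$.} The simulator now has only $v_c$, and the output to return is empty. The corrupted $P_c$ receives at most two ciphertexts: one under its own key, namely $\alpha_2$ when $c=2$ or $\beta_c$ when $c\ge 3$, which it decrypts; and, when $c\le n-1$, an opaque ciphertext $\alpha_{c+1}$ under $P_{c+1}$'s key. The opaque one is simulated by $E_{c+1}(0)$, indistinguishable from $\alpha_{c+1}$ by semantic security of $E_{c+1}$. For the decryptable one the key observation is that its plaintext, either $u_2 v_2 + r_2$ or $\Delta_c = \sum_{k=2}^c (u_k v_k + r_k)$, is perfectly masked by the uniform $r_c$ alone (when $c=2$) or by the sum of independent uniforms $\sum_{k=2}^c r_k$ (when $c\ge 3$), and is therefore uniform in $\Zpz{N_c}$. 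The simulator thus draws $\rho\leftarrow\Zpz{N_c}$ uniformly and returns a fresh $E_c(\rho)$; the plaintext distribution coincides exactly with the real one, and since Paillier-type encryption produces uniformly distributed ciphertexts in each coset, so does the ciphertext distribution.

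\textbf{Main obstacle.} The mechanics are routine, but two points deserve care. First, in the Paillier-like instantiation several cryptosystems with distinct moduli coexist, so the hybrid argument must swap ciphertexts one cryptosystem at a time, invoking semantic security of the appropriate $E_i$ at each step. Second, one must argue that the re-encryptions produced internally by intermediate honest players are statistically identical to the fresh encryptions output by the simulator; this holds because $E$ is randomized so that, for any fixed plaintext, its ciphertext distribution is uniform over the corresponding coset. I would finally remark, for context, that the proof is tight with respect to the single-corruption hypothesis: two colluders could already subtract consecutive $\Delta_i$'s to recover individual summands $u_i v_i + r_i$, which is precisely what the random ring order and related countermeasures of the subsequent sections are designed to defeat.
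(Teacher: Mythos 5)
Your proposal is correct and follows essentially the same simulation-based route as the paper: replace every ciphertext the corrupted player cannot decrypt by an encryption of a fixed or random value (reducing to semantic security of the relevant $E_i$), and simulate the one decryptable plaintext ($\Delta_c$ or $u_2v_2+r_2$) by a uniform element of $\Zpz{N_c}$, justified by the uniform mask $r_c$ chosen by $P_1$. Your write-up is in fact slightly more careful than the paper's on the explicit per-key hybrid, on restricting the view to messages actually received, and on the sign in the simulated $\gamma$.
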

\begin{proof}
We assume that the underlying cryptosystem $E$ is semantically secure
(IND-CPA secure). First, we suppose that only $P_1$ is corrupted.  His
view, in a real execution of the protocol, is  
$View_{P_1} = \{U,R,\gamma, S, A, B,C \}$, where  $U = \{u_i\}_{1 \le
  i \le n}$, $R = \{r_i\}_{1 \le i \le n}$, $ A
= \{\alpha_i\}_{2 \le i \le n}$, $B = \{\beta_i\}_{3 \le i \le
n-1}$ and $C = \{c_i\}_{2 \le i \le n}$. Now, $Sim_1$ is the simulator
for $P_1$ in the ideal case, where a simulated value $x$ is denoted
$x'$: by definition,  $P_1$'s private entries (vectors $U$ and $R$) are
directly accessible to $Sim_1$, along with the output $S$, sent by the
$TTP$. $Sim_1$ starts by generating $n-2$
random values, and then ciphers them using the corresponding public
keys: this simulates the $c_i'$ values. Then, using the provided $r_i$
and $u_i$ with the 
associated $c_i'$ and $P_i$'s public key, $Sim_1$ computes:
$\alpha_i' = c_i'^{u_i}*E_i(r_i), 2 \le i \le n$. Next, the simulation of
$B'$ is done by ciphering random values with the appropriate public key.
The $\gamma'$ value is computed
using $R$ along with the protocol output $S$: $\gamma' =
E_1(S + \sum_i^{n-2}r_i + u_1v_1)$. In the end, the simulator view is
$View_{Sim_1} = \{U,R,\gamma', S, A', B', C' \}$. If an adversary is
able to distinguish any ciphered values (\emph{e.g.} $C'$ from $C$ and thus
$A'$ from $A$), hence 
he is able to break the semantic security of the
underlying cryptographic protocol. This is assumed impossible. Moreover,
since the remaining values are computed as in a real execution, $P_1$
is not able to distinguish $View_{P_1}$ from $View_{Sim_1} $.
Second, we suppose that a player $P_i, i \ge 2$ is corrupted and denote
by $Sim_i$ the simulator in this case. Since
the role played by each participant is generic, (except for $P_n$,
which only differs by his computation of $\gamma$
instead of $\beta_{n+1}$), the simulators are easily adaptable. During a real
protocol execution, the view of $P_i$ is $View_{P_i} =
\{v_i, A, B, C, \gamma, \Delta_i\}$. 
Simulating the values also known to
$P_1$ is similar, up to the used keys. Hence,
the simulation of $A'$, $B'$, $\gamma'$, $C'$ (except $c_i$) is made
by ciphering random values 
using the adequate public key. $c_i$ is ciphered using $v_i$ and
the public key of $P_i$. For $\Delta_i'$, the simulator $Sim_i$
has to forward the random value previously chosen to be ciphered as
$\alpha_i$. Indistinguishability is based on the semantic
security of $E$ (for $A$, $B$, $C$ and $\gamma$) and on
the randomness added by $P_1$ (and thus unknown by $P_i$). Then,
$\Delta_i'$ is computationally indistinguishable from the real
$\Delta_i$.  Hence,  $View_{P_i}$ and  $View_{S_i}$ are
indistinguishable and $DSDP_n$ is secure against one semi-honest adversary. 
\end{proof}

\subsection{Automated Verification}\label{sec:proverif}

Alongside mathematical proofs, we use an automatic protocol
verification tool to analyze the security of the protocol.  Among
existing tools, we use \proverif{}~\cite{blanchet01,ProVerif}.
It allows users to add their
own equational theories to model a large class of  protocols.  In our
case, we model properties of the underlying cryptosystem including
addition and multiplication.  Sadly, verification of protocol in
presence of homomorphic function over abelian groups theory has been
proven undecidable \cite{Delaune:2006:URA:1221599.1222026}.  Moreover,
as showed in \cite{LP15}, some equational theories such as
Exclusive-Or can already outspace the tool's capacities. Thus we have
to provide adapted equational theories to be able to obtain results
with the tool.  We modeled the application of Pailler's or shared
modulus encryption properties on $\alpha_{i}$ messages that Bob
receives as follows: \newcounter{myenumi}
\begin{compactenum}\renewcommand {\theenumi}{(\roman{enumi})}
    \item $\forall u,v,r,k, \; bob(E_k(r), u, E_k(v)) = E_k(uv + r)$
\setcounter{myenumi}{\value{enumi}}
\end{compactenum}
This property allows Bob to obtain $u_{2}v_{2} + r_{2}$ from $\alpha_{2}$.
This also allows an intruder to simulate such calculus and impersonate Bob.
We also model:
\begin{compactenum}\renewcommand {\theenumi}{(\roman{enumi})}\setcounter{enumi}{\themyenumi}
    \item $\beta_{3}$ by $\forall u,v,r,x,y,z,k, \;
        charlie(E_k(uv + r), E_k(xy + z)) = E_k(uv + xy + r + z)$
    \item $\beta_{4}$ by $\forall u,v,r,x,y,z,a,b,c,k, \;
        dave(E_k(uv + xy + r + z), E_k(ab + c)) = E_k(uv + xy + ab + r + z + c)$
\end{compactenum}
In the following, we use \proverif{} to prove the security of our
protocols under the abstraction of the functionalities given in our
equational theory. \proverif{} discovers some attacks in presence of
active intruder. We then propose some countermeasures. 
The limits of \proverif{} are  reached and it does not
terminate. The associated source files are
available in a web-site: \textit{http://matmuldistrib.forge.imag.fr}

\paragraph{Analysis in case of a passive  adversary.}
Using these equational theories on the protocol described in
Figure~\ref{fig:dot3}, we verify it in presence of  a passive intruder.  Such
adversary is able to observe all the traffic of the protocol and tries
to deduce secret information of the messages.  This corresponds to a
\proverif{} intruder that only listens to the network and does not
send any message.  By default, this intruder does not possess the
private key of any agent and thus does not belong to the protocol. To
model a {\em semi-honest} adversary as defined in
Section~\ref{sec:background}, we  just give secret keys of honest
participants to the passive intruder knowledge in \proverif{}. Then
the tool proves that all secret terms cannot be learn by the intruder
for any combinations of leaked key.  This confirms the proofs given in
Section~\ref{sec:proofs} against the {\em semi-honest} adversaries.

\paragraph{Analysis in case of malicious adversary.}
The {\em malicious} adversary described in Section \ref{sec:background} is
an active intruder that controls the network and knows a private key of a
compromised honest participant.  Modeling this adversary in \proverif, we are
able to spot the two following attacks and give some  countermeasures:

\begin{compactenum}
\item[(i)]
Only the key of Alice is compromised and the countermeasure uses proofs of
knowledge. 
\item[(ii)]
 Only the key of Charlie is compromised and the countermeasure
uses signatures. 
\end{compactenum}
In the rest of the section, we present these two
points. In the Section~\ref{sec:randomrings}, we also give a solution called
\emph{random ring} for the case where both keys of Alice and Charlie are
compromised.

\emph{(i) The key of Alice is compromised.} An attack on the secrecy of $v_{2}$,
the secret generated by Bob, is then presented in Figure~\ref{fig:atk_v2}.
\begin{figure}[tb]
    \resizebox{\columnwidth}{!}{
\input{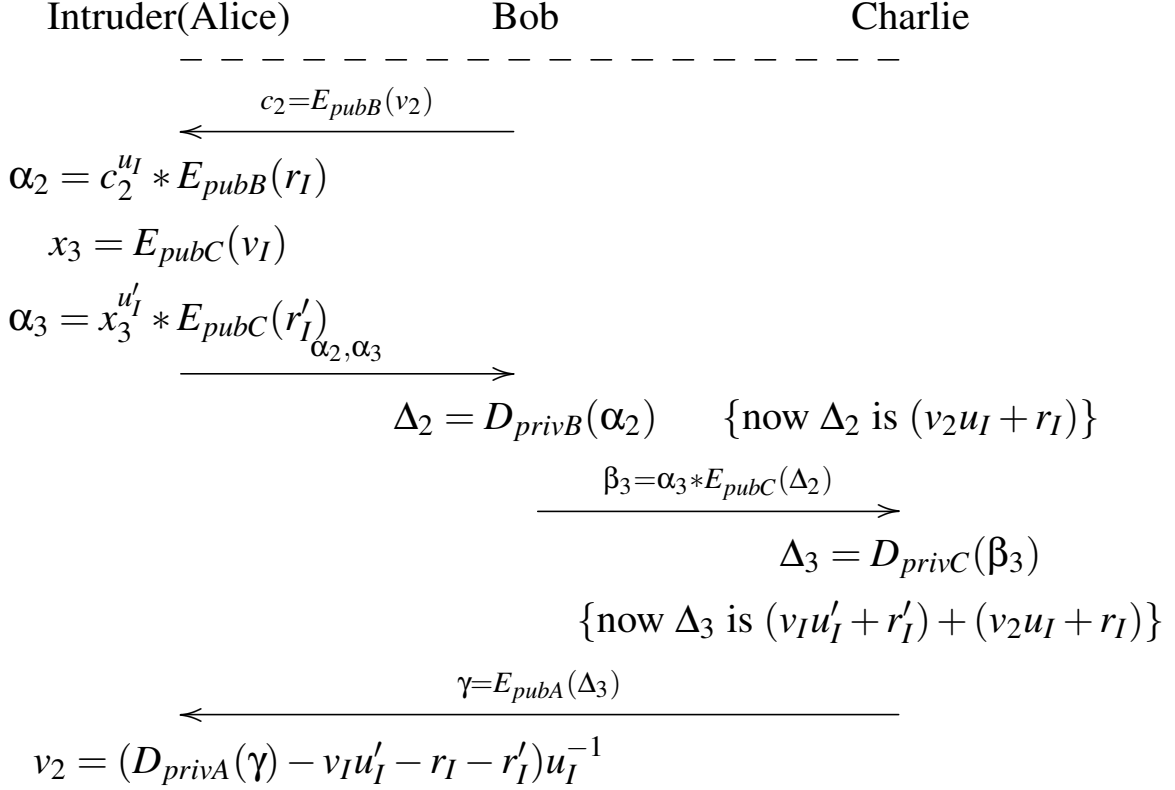}
}
\caption{Attack on the secrecy of $v_{2}$}\label{fig:atk_v2}
\end{figure}

 The malicious adversary usurps Alice and replaces all the $\alpha_{i}$ messages,
 arriving from the other agents, with one message she generated, except
 one message, denoted $c_{2}$ in Figure~\ref{fig:atk_v2}.  He lets the
 protocol end normally and obtains a term where only $v_{2}$ is
 unknown. He learns $v_2$.  If the key of Alice ($P_1$) is compromised,
 \proverif{}  also finds an attack on any of the other players secrecy.
 Suppose, w.l.o.g, that $P_2$ is the target, $P_1$ replaces each $\alpha_i$
 except $\alpha_2$ by ciphers $E_i(x_i)$ where $x_i$ are known to him.
 $x_i=0$ could do for instance ($x_i=0v_i+r_i$ also), since after
 completion of the protocol, $P_1$  learns $u_2v_2+r_2+\sum_{i=3}^n
 x_i$, where the $u_i$ and $r_i$ are known to him. Therefore, $P_1$
  learns $v_2$. Note also that similarly, for instance,
 $\alpha_2=1v_2+0$ and $x_3=v_3$ could also reveal $v_2$ to~$P_3$.
{\it Counter measure:} this attack, and more generally attacks on the
form of the $\alpha_i$ can be counteracted by zero-knowledge proofs of
knowledge.  $P_1$ has to prove to the other players that $\alpha_i$ is a
non trivial affine transform of their secret $v_i$. For this we use a
variant of a proof of knowledge of a discrete logarithm~\cite{Chaum86}
given in Figure~\ref{fig:pdl}.
\begin{figure}[htb]\center
\resizebox{\columnwidth}{!}{
\input{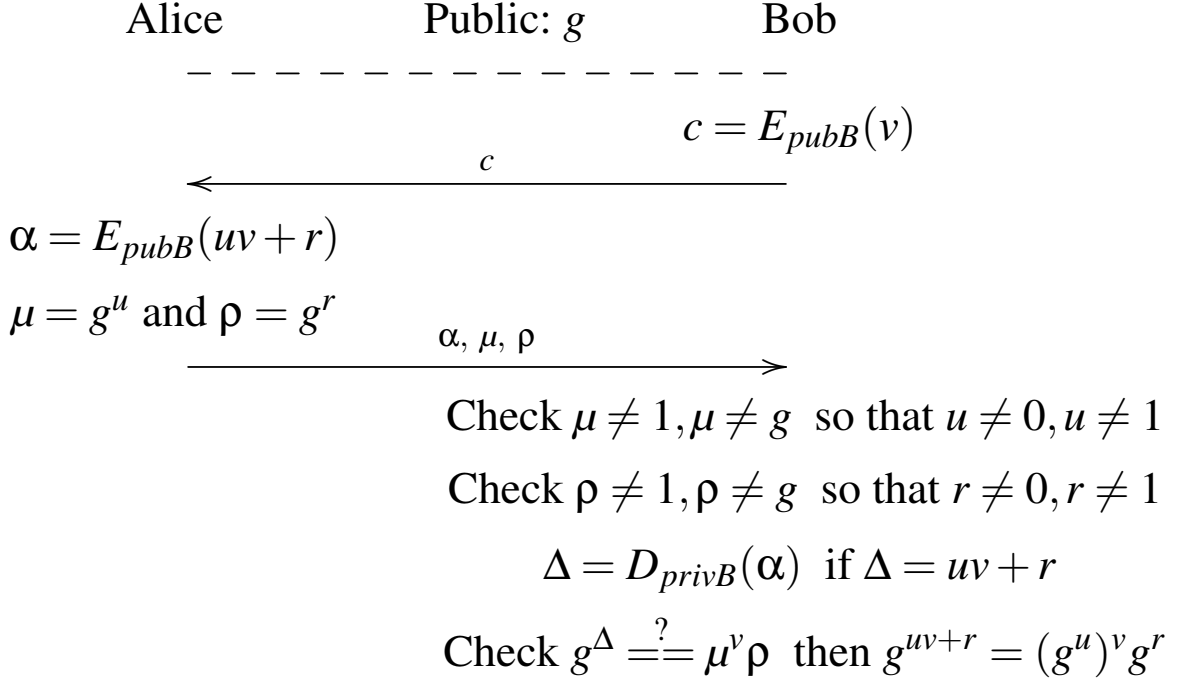}
}
\caption{Proof of a non trivial affine transform}\label{fig:pdl}
\end{figure}

In the Protocol~\ref{tn_protocol}, this proof of a non trivial affine
transform applies as is to $\alpha_2$ with $\mu_2=g^{u_2}$,
$\rho_2=g^{r_2}$ so that the check of $P_2$ is
$\delta_2=g^{\Delta_2}\stackrel{?}{==}\mu_2^{v_2}\rho_2$.
Differently, for the subsequent players, the
$\delta_{i-1}=g^{\Delta_{i-1}}$ used to test must be forwarded: indeed
the subsequent players have to check in line~\ref{prot:delta} that
$\Delta_i=u_iv_i+r_i+\Delta_{i-1}$.  Thus with $P_1$ providing
$\mu_i=g^{u_i}$, $\rho_i=g^{r_i}$ and $P_{i-1}$ providing
$\delta_{i-1}$, the check of player $P_i$ ends with
$\delta_i=g^{\Delta_i}\stackrel{?}{==}\mu_i^{v_i}\rho_i\delta_{i-1}$.
As for proofs of knowledge of discrete logarithm, secrecy of our proof
of non trivial affine transform is guaranteed as long as the discrete
logarithm is difficult.  The overhead in the protocol, in terms of
communications, is  to triple the size of the messages from
$P_1$ to $P_i$, with $\alpha_i$ growing to $(\alpha_i,\mu_i,\rho_i)$,
and to double the size of the messages from $P_i$ to $P_{i+1}$, with
$\beta_i$ growing to $(\beta_i,\delta_i)$.  In terms of computations,
it is also a neglectible linear global overhead.

\emph{(ii) The key of Charlie is compromised.} There \proverif{} finds
another attack on the secrecy of $v_{2}$.  This time the key of
Charlie is compromised and the malicious adversary blocks all
communications to and from Alice who is honest.  The adversary
performs the same manipulation on the $\alpha_i$ terms which are
directly sent to Bob.  Thus, this attack becomes feasible since the
adversary knows the terms $u_{2}$, $u_{3}$, $r_{2}$, $r_{3}$ and
$v_{3}$ that he generated and $\Delta_{3} = (v_2u_2+r_2)+(v_3u_3+r_3)$
using the private key of Charlie.  Such an attack relies on the fact
that Bob has no way to verify if the message he receives from Alice
has really been sent by Alice.  This can be avoided using
cryptographic signatures.%
This attack can be generalized to any number of participants.
The attack needs the adversary to know the key of Alice (since she is the only
one to know the $u_{i}$ and $r_{i}$ values thanks to the signatures).
Then, to obtain the secret value of a participant $P_{i}$, the key of
participants $P_{i-1}$ and $P_{i+1}$ are also needed:

\begin{compactenum}\renewcommand {\theenumi}{(\roman{enumi})}
    \item $P_{i-1}$ knows $\Delta_{i-1} = (u_{2}v_{2} + ... + u_{i-1}v_{i-1}
        + r_{2} + ... + r_{i-1})$.
    \item $P_{i+1}$ knows $\Delta_{i+1} = (u_{2}v_{2} + ... + u_{i-1}v_{i-1}
        + u_{i}v_{i} + u_{i+1}v_{i+1} + r_{2} + ... + r_{i-1} + r_{i}
        + r_{i+1})$.
\end{compactenum}

Thus, by simplifying $\Delta_{i-1}$ and $\Delta_{i+1}$, the malicious adversary
obtains $u_{i}v_{i} + u_{i+1}v_{i+1} + r_{i} + r_{i+1}$ where he can remove
$u_{i+1}$, $v_{i+1}$, $r_{i}$, $r_{i+1}$ and $u_{i}$ to obtain $v_{i}$.
For more than three participants, we see in
Section~\ref{sec:randomrings} that these kinds of threats can be
diminished if the protocol is replayed several times in random orders.

\section{\uppercase{Parallel Approach}}\label{sec:matmul}
In order to speed up the overall process, we show that we can cut each
dot-product into blocks of $2$ or $3$ coefficients.  On the one hand,
the overall volume of communications is unchanged, while the number of
rounds is reduced from $n$ to a maximum of $5$.  On the other hand,
semantic security is dropped, but we will see at the end of this
section that by simply repeating the protocol with a wiretap mask it
is possible to make the probability of breaking the protocol
negligible.

An application of the $DSDP_n$ protocol is the computation of matrix
multiplication.
In this case, instead of knowing 
one vector, each player $P_i$ owns two rows, $A_i$ and $B_i$, one of
each $n\times{}n$ matrices $A$ and $B$. At the end, each $P_i$ learns a row
$C_i$ of the matrix $C=AB$. 
In order to compute the matrix product, it
is therefore natural to parallelize $DSDP_n$:
each dot-product is cut into blocks of $2$ or $3$ coefficients.
Indeed, scalar product
between three players (resp. four) involves two (resp. three) new
coefficients in addition to the ones already known by $P_i$.
For $P_1$, the idea is to call $DSDP_3$ on the coefficients
$u_1,v_1$ and $u_2,u_3$ of $P_1$,
and $v_2,v_3$ of $P_2$ and $P_3$. Then $P_1$ knows
$s=u_1v_1+u_2v_2+u_3v_3$. 
$P_1$ can then continue the protocol with $P_4$ and $P_5$, using 
$(s,1)$ as his first coefficient  and $u_4,u_5$ to be combined with
$v_4,v_5$, etc. 
$P_1$ can also launch the computations in parallel. Then $P_1$ 
adds his share $u_1v_1$ only after all the computations.
For this it is sufficient to modify line~\ref{prot:return} of
$DSDP_n$ as: 
$P_{1} : S = D_1(\gamma) - \sum_{i=1}^{n-1} r_{i} $.
This is given as the $ESDP_n$ protocol variant in
Algorithm~\ref{en_protocol}.
\begin{algorithm}[htb]  
 \caption{$ESDP_n$ Protocol: External Secure Dot-Product of
   size~$n$}\label{en_protocol} 
  \begin{algorithmic}
\REQUIRE $n+1$ players, $P_1$ knows a coefficient vector $U\in\F^n$,
each $P_i$ knows components $v_{i-1}$ of $V\in\F^n$, for $i=2 \ldots n+1$.
\ENSURE $P_1$ knows $S=U^TV$.
\RETURN $DSDP_{n+1}(P_1\ldots{}P_{n+1},[0,U],[0,V])$.
\end{algorithmic}
\end{algorithm}
\subsection{Partition in Pairs or Triples}
Depending on the parity of $n$, and since $gcd(2,3)=~1$, calls to
$ESDP_2$ and $ESDP_3$ are sufficient to cover all possible dot-product
cases, as shown in protocol $PDSMM_n$ of Algorithm~\ref{mn_protocol}.
The protocol is cut in two parts. The loop allows us to go all
over coefficients by block of size $2$. 
In the case where $n$ is even, a block of $3$ coefficients 
is treated with an instance of
$ESDP_3$. 
In terms of efficiency and depending on the parity of $n$,
$ESDP_2$ is  called $\frac{n-1}{2}$ or $\frac{n}{2}-2$ times, and $ESDP_3$ is  called $0$ or $1$ times.

\begin{algorithm}[htb]\caption{$PDSMM_n$ Protocol: Parallel Distributed
    and Secure Matrix Multiplication}\label{mn_protocol}
  \begin{algorithmic}[1]
    \REQUIRE $n$ players, each player $P_i$ knows rows $A_i$ and $B_i$
    of two $n\times{}n$ matrices $A$, $B$.
    \ENSURE Each player $P_i$ knows row $i$ of $C=AB$.
    \FOR{Each row: i=1 \ldots n}
    \FOR{Each column: j=1 \ldots n}
    \STATE\label{sinit} $s\leftarrow a_{i,i} b_{i,j}$
    \IF{$n$ is even}
    \STATE $k_1\leftarrow (i-1)\mod n+1$; $k_2\leftarrow (i-2)\mod n+1$; $k_3\leftarrow (i-3)\mod n+1$; 
    \STATE $s \leftarrow s + ESDP_3(P_i$, $[P_{k_3},P_{k_2},P_{k_1}]$,
    $[a_{i,k_3},a_{i,k_2},a_{i,k_1}]$, $[b_{k_3,j},b_{k_2,j},b_{k_1,j}])$\\
    \STATE $t\leftarrow\frac{n-4}{2}$
    \ELSE
    \STATE $t\leftarrow\frac{n-1}{2}$
    \ENDIF
    \FOR{$h=1 \ldots t$} 
    \STATE $k_1\leftarrow (i+2h-1)\mod n+1$; $k_2\leftarrow (i+2h)\mod n+1$; 
    \STATE $s \leftarrow s + ESDP_2(P_i$, $[P_{k_1},P_{k_2}]$,
    $[a_{i,k_1},a_{i,k_2}]$, $[b_{k_1,j},b_{k_2,j}])$\\
    \ENDFOR
    \STATE $c_{i,j}\leftarrow s$
    \ENDFOR
    \ENDFOR
  \end{algorithmic}  
\end{algorithm}

\begin{theorem}
  The $PDSMM_n$ Protocol in Algorithm~\ref{mn_protocol} is correct.
  It runs in less than $5$ parallel communication rounds.
\end{theorem}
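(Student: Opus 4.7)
My plan is to decompose the target coefficient $c_{i,j}=\sum_{k=1}^n a_{i,k} b_{k,j}$ into (a) the self-term $a_{i,i}b_{i,j}$, which $P_i$ owns in full (both $a_{i,i}$ from row $A_i$ and $b_{i,j}$ from row $B_i$) and therefore computes locally in line~\ref{sinit}, and (b) the off-diagonal contributions $\sum_{k\neq i} a_{i,k}b_{k,j}$. For (b), I would verify that the even-branch triple $(k_3,k_2,k_1)$ together with all pairs $(k_1,k_2)$ generated in the $h$-loop partition $\{1,\ldots,n\}\setminus\{i\}$ exactly once. This reduces to a short case analysis on the parity of $n$: when $n$ is odd, the $(n-1)/2$ pairs cover the $n-1$ off-diagonal indices; when $n$ is even, one triple together with $(n-4)/2$ pairs covers $3+(n-4)=n-1$ indices. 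Correctness of each call is then inherited from the previous theorem on $DSDP_n$, applied through $ESDP_k=DSDP_{k+1}$ with a zero first coordinate for Alice (so that the spurious $u_1v_1$ contribution vanishes by construction of $ESDP$). Summing the returned partial sums into $s$ then yields exactly $c_{i,j}$.

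\textbf{Round count.} For the parallel round complexity I would count the communication rounds of $DSDP_n$ step by step from Algorithm~\ref{tn_protocol}: round~$1$ is the simultaneous upload of every $c_i$ from $P_i$ to $P_1$; round~$2$ is $P_1$ simultaneously dispatching every $\alpha_i$ to its recipient; rounds~$3,\ldots,n$ form the sequential chain in which $P_{i-1}$ forwards $\beta_i$ to $P_i$; and round~$n+1$ is $P_n$ delivering $\gamma$ to $P_1$. This totals $n+1$ rounds, so $ESDP_2=DSDP_3$ completes in $4$ rounds and $ESDP_3=DSDP_4$ in $5$ rounds. Since within $PDSMM_n$ no $ESDP$ call depends on the output of another and distinct calls use disjoint randomness and disjoint ciphertexts, all calls can be scheduled in parallel. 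The overall round count is therefore bounded by the slowest call, that is at most $5$ rounds (in fact $4$ when $n$ is odd, since only $ESDP_2$ is then invoked).

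\textbf{Main obstacle.} The chief subtlety I anticipate is justifying the parallel scheduling rigorously: a given player $P_k$ may be asked to act as a non-Alice participant in several $ESDP$ calls at the same time, whether across different columns $j$ or across different rows $i$, and one must argue that each logical round is a batch of mutually independent messages so that $P_k$ can honour all of its duties in one synchronization step. This works because the cryptograms handled in distinct calls are completely independent, but it should be stated carefully. A secondary, purely bookkeeping check is that the $\bmod\, n$ arithmetic on $k_1,k_2,k_3$ really traverses the ring of participants without ever colliding with $i$ and without repetition; this reduces to a routine verification by parity and by position of $i$.
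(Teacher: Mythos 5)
Your proposal is correct and follows essentially the same route as the paper: the same decomposition of $c_{i,j}$ into the locally known term $a_{i,i}b_{i,j}$ plus a partition of the remaining indices into pairs (and one triple when $n$ is even) handled by $ESDP_2$/$ESDP_3$, and the same round count ($4$ rounds for $ESDP_2$, $5$ for $ESDP_3$, all calls independent hence parallelizable). Your explicit $n+1$ round count for $DSDP_n$ and the remark about a player serving in several concurrent $ESDP$ instances are just slightly more detailed versions of what the paper asserts.
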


\begin{proof}
Correctness means that at the end, each $P_i$ has learnt row $C_i$ of
$C=AB$. Since the protocol is applied on each rows and columns,
let us show that for a row $i$ and a column $j$, Algorithm~\ref{mn_protocol} 
gives the coefficient $c_{ij}$ such that 
$c_{ij} = \sum_{k=1}^na_{ik}*b_{kj}$.
First, the $k_i$ coefficients are just the values $1 \ldots (i-1)$ and
$(i+1) \ldots n$ in order. 
Then, the result of any $ESDP_2$ step is
$a_{i,k_1}b_{k_1,j}+a_{i,k_2}b_{k_2,j}$ and the result of the
potential $ESDP_3$ step is
$a_{i,k_3}b_{k_3,j}+a_{i,k_2}b_{k_2,j}+a_{i,k_1}b_{k_1,j}$.
Therefore accumulating them in addition of $a_{i,i}*b_{i,j}$ produces as expected 
$c_{ij} = \sum_{k=1}^na_{ik}*b_{kj}$.

Now for the number of rounds, for all $i$ and $j$, all the $ESDP$
calls are independent.
Therefore, if each player can simultaneously send and receive multiple
data we have that:
in parallel, $ESDP_2$, like $DSDP_3$ in Figure~\ref{fig:dot3},
requires $4$ rounds with a constant number of operations: one round
for the $c_i$, one round for the $\alpha_i$, one 
round for $\beta_3$ and one round for $\gamma$. 
As shown in Algorithm~\ref{tn_protocol}, $ESDP_3$, like $DSDP_4$, requires
only a single additional round for $\beta_4$.
\end{proof}

\subsection{Random Ring Order Mitigation}\label{sec:randomrings}
We have previously seen that if
the first player of a dot-product cooperates with the third one she
can always recover the second player private value. If
the first player cooperates with two well placed players she can
recover the private value of a player in between.  In the trust
evaluation setting every malicious player plays the role of the first
player in its row and therefore as soon as there is a collaboration,
there is a risk of leakage.  To mitigate this cooperation risk, our
idea is to repeat the dot product protocol in random orders, except
for the first player.  To access a given private value, the malicious
adversaries  have to be well placed in \emph{every} occurrence of the
protocol. Therefore if their placement is chosen uniformly at random
the probability that they recover some private value diminishes with
the number of occurrences.  In practice, they  use a pseudo, but
unpredictable, random generator to decide their placement: as each of
them has to know their placement, they can for instance use a
cryptographic hash function seeded with the alphabetical list of the
players distinguished names, with the date of the day and with random values
published by each of the players.  We
detail the overall procedure only for one dot-product, within the
$PDSMM_n$ protocol. Each player except the first one
masks his coefficient $v$ as in a simple wiretap
channel~\cite{Ozarov:1984:wiretapII}, as sketched in
Algorithm~\ref{alg:wiretap}.
\begin{algorithm}[htb]
  \caption{Wiretap repetition of the dot-product}\label{alg:wiretap}
  \begin{algorithmic}[1]
    \STATE The players agree on  $d$  occurrences.
    \STATE Each player computes his placement order in each occurrence of
    the protocol from the cryptographic hash function.
\makeatletter\renewcommand{\ALC@linenodelimiter}{a:}\makeatother
    \STATE  With a shared modulus cryptosystem, the players should share a
      common modulo $M$ satisfying Hypothesis~(\ref{hyp:benaloh}).
      In the first occurrence, each player $P_j$ then masks his private input
      coefficient $v_j$ with $d-1$ random values $\lambda_{j,i}\in\Zpz{M}$:
$%
        v_j-\sum_{i=2}^d \lambda_{j,i}.
$%
\addtocounter{ALC@line}{-1}%
\makeatletter\renewcommand{\ALC@linenodelimiter}{b:}\makeatother%
   \STATE With a Paillier-like cryptosystem, the players choose their
      moduli according to Hypothesis~(\ref{hyp:paillier}), where $B^2$ is
      replaced by $dB^2$, in groups of size $n=4$ (the
        requirements of (\ref{hyp:paillier}) on the moduli
        are somewhat sequential, but  can be satisfied independently if
        each modulo is chosen in a distinct interval larger
        than~$3dB^2$).
      Then, in the first occurrence, each player $P_j$ masks his
      private input coefficient $v_j$ with $d-1$ random values
      $0\leq\lambda_{j,i}<B$: 
$%
        v_j+\sum_{i=2}^d (B-\lambda_{j,i})<dB. 
$%
\makeatletter\renewcommand{\ALC@linenodelimiter}{:}\makeatother
    \STATE Then for each subsequent occurrence, each player replaces its
    coefficient by one of the $\lambda_{j,i}$.
    \STATE In the end, the first player has gathered $d$ dot-products and
    just needs to sum them in order to recover the correct one.
  \end{algorithmic}
\end{algorithm}

\begin{theorem}
Algorithm~\ref{alg:wiretap} correctly allows the first player to compute the
dot-product.
\end{theorem}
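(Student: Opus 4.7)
The plan is to verify that the $d$ dot-products produced by Algorithm~\ref{alg:wiretap} combine back into the desired $\sum_j u_j v_j$, so that $P_1$ can recover the true value. The underlying observation is that each $v_j$ is additively decomposed into $d$ shares, one used per occurrence of the dot-product protocol, and summing the $d$ outcomes telescopes back to the original input, up to a publicly computable offset in the Paillier setting.

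First I would check that every occurrence is itself a well-formed call to $DSDP$. In the shared modulus case, all masked coefficients still live in $\Zpz{M}$ and Hypothesis~(\ref{hyp:benaloh}) transfers unchanged. In the Paillier-like case, the first share of $v_j$ is bounded by $dB$ by construction, while each subsequent share lies in $[0,B)$; hence each elementary product $u_j\tilde v_j$ is at most $dB^2$. This is precisely why Hypothesis~(\ref{hyp:paillier}) is rescaled by replacing $B^2$ with $dB^2$, so that each occurrence returns the correct integer dot-product by the theorem of Section~\ref{sec:protocol}.

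Next, I would establish the share reconstruction identities. Writing $\tilde v_{j,1}$ for the mask used by $P_j$ in the first occurrence and $\tilde v_{j,i}=\lambda_{j,i}$ for $i\geq 2$, a direct computation gives $\sum_{i=1}^d \tilde v_{j,i}\equiv v_j\pmod{M}$ in the shared modulus case, and $\sum_{i=1}^d \tilde v_{j,i}=v_j+(d-1)B$ in the Paillier-like case. By linearity of the dot-product in its second argument, the sum of the $d$ values received by $P_1$ equals $\sum_j u_j\bigl(\sum_i \tilde v_{j,i}\bigr)$, which yields $\sum_j u_j v_j\pmod{M}$ (and hence the integer value by~(\ref{hyp:benaloh})) in the shared modulus case, and $\sum_j u_j v_j+(d-1)B\sum_j u_j$ in the Paillier-like case.

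The only delicate point is the Paillier setting: the naive sum of the $d$ outputs differs from the true dot-product by the known additive offset $(d-1)B\sum_j u_j$, which $P_1$ can compute locally since $P_1$ owns the entire vector $U$, and then subtract to obtain the exact integer value. Beyond this subtraction, no further obstruction arises; the remainder is a routine telescoping calculation and a check that the underlying $DSDP$ occurrences are applied within the correct numerical bounds.
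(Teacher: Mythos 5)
Your proposal is correct and follows essentially the same route as the paper: the shares of each $v_j$ telescope to $v_j$ exactly in the shared-modulus case and to $v_j+(d-1)B$ in the Paillier-like case, so $P_1$ sums the $d$ outputs and, in the latter case, subtracts the publicly computable offset $(d-1)B\sum_{j} u_j$. The only addition beyond the paper's proof is your explicit check that each occurrence satisfies the (rescaled) modulus hypotheses, which the paper leaves implicit in the statement of the algorithm.
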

\begin{proof}
First, in a shared modulus setting, after the first occurrence, Alice
  ($P_1$) gets $S_1=\sum_{j=2}^n u_j\left(v_j-\sum_{i=2}^d
  \lambda_{j,i}\right)$.  Then in the following occurrences, Alice
  gets $S_i=\sum_{j=2}^n u_j\lambda_{j,i}$.  Finally she computes
  $\sum_{i=1}^d S_i = \sum_{j=2}^n u_jv_j$.
Second, similarly, in a Paillier-like setting, after the first
  occurrence, Alice recovers $S_1=\sum_{j=2}^n
  u_j\left(v_j+\sum_{i=2}^d (B-\lambda_{j,i})\right)$.  Then in the
  following occurrences, Alice gets $S_i=\sum_{j=2}^n
  u_j\lambda_{j,i}$.  Finally she computes $\sum_{i=1}^d S_i -
  (d-1)B(\sum_{j=2}^n u_j) = \sum_{j=2}^n
  u_j(v_j+(d-1)B)-(d-1)Bu_j=\sum_{j=2}^n u_jv_j$.
\end{proof}

 We give now the probability of avoiding
attacks in the case when $n=2t+1$, but
the probability in the even case should be close.

\begin{theorem}\label{thm:wiretap:average}
Consider $n=2t+1$ players, grouped by $3$, of which $k\leq{}n-2$ are
malicious and cooperating, including the first one Alice.  Then, it is
on average sufficient to run Algorithm~\ref{alg:wiretap} with
$d\leq{}2\ln\left(\min\{k-1,n-k,\frac{n-1}{2}\}\right)\left(1+\frac{k-1}{n-k-1}\right)$
occurrences, to prevent the malicious players from recovering any
private input of the non malicious ones.
\end{theorem}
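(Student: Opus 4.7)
The proof strategy is a union-bound / linearity-of-expectation argument, applied over the $n-k$ honest non-Alice participants. For a fixed honest player $P_j$, the plan is to first estimate the per-occurrence probability that the coalition's attack from Section~\ref{sec:proverif} succeeds, then raise that to the $d$-th power using independence of the $d$ random rings, and finally combine over all honest players.

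Concretely, recall that the attack recovers the current share of $v_j$ iff \emph{both} ring-neighbors of $P_j$ lie in the coalition (the chain endpoints, adjacent to Alice on one side, are an edge case that a uniformly random placement of $P_j$ makes rare). With $k-1$ malicious non-Alice players among the $n-2$ candidates for the two slots adjacent to $P_j$, the per-occurrence exposure probability is at most
\[
p \;\leq\; \frac{\binom{k-1}{2}}{\binom{n-2}{2}} \;=\; \frac{(k-1)(k-2)}{(n-2)(n-3)} \;\leq\; \left(\frac{k-1}{n-2}\right)^{\!2}.
\]
Partitioning the chain into blocks of three preserves this local adjacency analysis block-by-block. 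Since the wiretap scheme splits $v_j$ into $d$ independent additive shares, the coalition learns $v_j$ only when every one of the $d$ rings, drawn independently, exposes its respective share, which happens with probability at most $p^{d}$. Linearity of expectation then bounds the expected number of leaked honest secrets by $(n-k)\,p^{d}$, and ``on average sufficient'' is exactly the condition $(n-k)\,p^{d}\leq 1$.

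Solving for $d$ gives $d\geq \ln(n-k)/\ln(1/p) \geq \ln(n-k)/\bigl(2\ln\bigl((n-2)/(k-1)\bigr)\bigr)$. To reach the symmetric form in the statement, I would apply the elementary inequality $\ln(1+x)\geq x/(1+x)$ with $x=(n-k-1)/(k-1)$, giving $\ln((n-2)/(k-1)) \geq (n-k-1)/(n-2)$ and hence a sufficient value proportional to $\ln(n-k)\bigl(1+(k-1)/(n-k-1)\bigr)$. Replacing $\ln(n-k)$ by $\ln\bigl(\min\{k-1,\,n-k,\,(n-1)/2\}\bigr)$ is achieved by splitting into regimes: when $k-1$ is small the combinatorial factor $\binom{k-1}{2}$ in the numerator of $p$ is itself vanishing, tightening the bound; when $n-k$ is small the union bound has few summands; and $(n-1)/2$ is the worst case of a balanced split. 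Symmetrising these three regimes is what produces the $\min$ and absorbs the remaining constants into the stated leading factor~$2$.

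The main obstacle is this last simplification: turning the natural $\ln(n-k)$ from the union bound into the symmetric $\min$-form, which requires a clean case split on the relative sizes of $k-1$ and $n-k$ and careful accounting of how $p$ tightens when one side is very small, so that all constants collapse into the stated factor~$2$. A secondary technicality is the correction for $P_j$ landing at a chain endpoint rather than an interior position (where only one non-Alice neighbor is required for the attack); since in a uniformly random ordering this occurs with probability $2/(n-1)$, it contributes only lower-order terms that do not affect the leading asymptotics.
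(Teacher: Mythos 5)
Your overall strategy --- a per-player union bound on the probability of surviving all $d$ independent placements, i.e.\ requiring $(n-k)p^d\leq 1$ --- is a genuinely different route from the paper's, which instead bounds the \emph{expected number of occurrences} until every honest player has, at least once, been grouped with another honest player (a coupon-collector style sum of waiting times $\frac{2(n-2)}{a(n-k-1)}$ over the number $a$ of still-attackable players). However, as written your argument has a concrete error and an admitted hole at the decisive step. First, the exposure event is wrong for the setting of the theorem. The players are grouped by $3$: each occurrence partitions the $n-2$ non-Alice participants into pairs, each pair running $ESDP_2$ with Alice, and by the attack of Section~\ref{sec:proverif} Alice recovers an honest player's share as soon as his \emph{single} partner in the pair is malicious. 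Hence the per-occurrence exposure probability is $\frac{k-1}{n-2}$, not $\binom{k-1}{2}/\binom{n-2}{2}\leq\left(\frac{k-1}{n-2}\right)^{2}$; your ``both ring-neighbours malicious'' picture describes the long ring of $DSDP_n$, not the blocks of $PDSMM_n$ to which Algorithm~\ref{alg:wiretap} is applied. This halves $\ln(1/p)$ and changes all subsequent numerics, so the constants you claim to ``absorb'' are not actually under control.

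Second, the step you yourself flag as the main obstacle --- replacing $\ln(n-k)$ by $\ln\left(\min\{k-1,n-k,\frac{n-1}{2}\}\right)$ --- is exactly where the content of the theorem lies, and your sketch of it does not go through: the claim that small $k-1$ ``tightens $p$'' rests on the incorrect quadratic exposure probability, and with the correct $p=\frac{k-1}{n-2}$ a union bound over all $n-k$ honest players naturally produces $\ln(n-k)$, which exceeds $2\ln(k-1)$ whenever $k$ is small. The paper obtains the $\min$ for free because its recursion is indexed by the number $a$ of simultaneously attacked honest players, which satisfies $a\leq\min\{k-1,n-k\}$ (each malicious partner can expose at most one victim per occurrence), so the harmonic sum $\sum_{i\leq a}1/i$ is bounded by the logarithm of that minimum directly; your union bound cannot see the cap $k-1$ without additional work that is not supplied. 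Finally, note that $(n-k)p^{d}\leq 1$ tolerates one leaked secret in expectation, which is not the same guarantee as the one proved in the paper (that the expected number of occurrences after which \emph{no} private input can be recovered is at most the stated bound); if you keep the union-bound route you must say precisely which quantity you bound and why it implies the claim.
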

\begin{proof}
The idea is that for a given private input of a non malicious player
Bob, to be revealed to Alice, Bob needs to be placed between
cooperating malicious adversaries {\em at each occurrence} of the
protocol. If there is only one non malicious player, then nothing can
be done to protect him.  If there is $2$ non malicious, they are safe
if they are together one time, this happens with probability
$\frac{1}{n-2}$, and thus on average after $n-2$ occurrences.
Otherwise, $PDSMM_n$ uses $t=\frac{n-1}{2}$ groups of $3$, including
Alice.  Thus, each time a group is formed with one malicious and one
non malicious other players, Alice can learn the private value of the
non malicious player.  Now, after any occurrence, the number $a$ of
attacked players is less than the number of malicious players minus
$1$ (for Alice) and obviously less than the number of non malicious
players: $0\leq a<\min\{k-1,n-k\}$.  Thus let $b=k-1-a$ and $c=n-k-a$.
In the next occurrence, the probability of saving at least one more
non malicious is
$\frac{a(a-1+c)(n-3)!}{(n-1)!}\frac{n-1}{2}=\frac{a(a-1+c)}{2(n-2)}=\frac{a(n-k-1)}{2(n-2)}$,
so that the average number of occurrences to realize this is
$\E_{n,k}(a)=\frac{2(n-2)}{a(n-k-1)}$.  Thus, $T_{n,k}(a)$, the
average number of occurrences to save all the non malicious players,
satisfies $T_{n,k}(a)\leq{} \E_{n,k}(a)+T_{n,k}(a-1)\leq{}\sum_{i=a}^3
E_{n,k}(i)+T_{n,k}(2)=(\sum_{i=a}^3
\frac{1}{i})\frac{2(n-2)}{n-k-1}+T_{n,k}(2)$.  With $2$ attacked and
$c$ saved, $T_{n,k=n-c-2}(2)=\frac{n-2}{c+1}$ so that
$T_{n,k}(a)\leq{}(H_a-\frac{3}{2})\frac{2(n-2)}{n-k-1}+\frac{n-2}{n-k-1}$,
where bounds on the Harmonic numbers give $H_a\leq\ln{a}$ (see,
\emph{e.g.}, \cite{Batir:2011:harmonic}) and since $a\leq k-1$ and
$a\leq n-k$, this shows also that $2a\leq n-1$.  Therefore,
$T_{n,k}(a)\leq
2\ln\left(\min\{k-1,n-k,\frac{n-1}{2}\}\right)\frac{n-2}{n-k-1}$.
\end{proof}
For instance, if $k$, the number of malicious insiders, is less than
the number of non malicious
ones, the number of repetitions sufficient to prevent any attack is
on average bounded by \bigO{\log{k}}. 
To guaranty a probability of failure less than $\epsilon$, one needs
to consider also the worst case. There, we can have $k=n-2$ malicious
adversaries and the number of repetitions can grow to $n\ln(1/\epsilon)$:
\begin{proposition}\label{prop:wiretap:worst}
With $n=2t+1$, the number $d$ of random ring repetitions of
Algorithm~\ref{alg:wiretap} 
to make the probability of breaking the protocol lower than $\epsilon$
satisfies $d<n\ln(1/\epsilon)$ in the worst case.
\end{proposition}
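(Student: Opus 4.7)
The plan is to identify the worst case over $k$, reduce it to a single pairing event, and then bound the tail. First I would argue that within the allowed range $k\leq n-2$ the worst configuration is $k=n-2$, leaving exactly two honest players $P_a,P_b$ distinct from the malicious Alice. In any occurrence of $PDSMM_n$, the $n-1=2t$ non-Alice participants are split uniformly at random into $t$ disjoint pairs; each pair is then joined with Alice to form the triple processed by $ESDP_2$. The sandwich attack of Section~\ref{sec:security} extracts the share of an honest player exactly when its pair partner is malicious, so $P_a$ and $P_b$ are safe in a given occurrence if and only if they fall in the same pair.

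Next I would turn this into the announced probability. A standard matching count shows that, in a uniformly random pairing of $2t$ players, the probability that two specified players end up together is $1/(n-2)$. Because the wiretap decomposition of Algorithm~\ref{alg:wiretap} splits every $v_j$ into $d$ shares, Alice recovers $v_a$ (or $v_b$) only if she collects every share, i.e.\ only if $P_a$ is paired with a malicious partner in \emph{every} one of the $d$ occurrences. Modeling the hash-driven orderings of Section~\ref{sec:randomrings} as independent uniform samples, the probability that the protocol is broken is therefore $\left(1-\tfrac{1}{n-2}\right)^d$.

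Finally I would invert the inequality $\left(1-\tfrac{1}{n-2}\right)^d\leq\epsilon$, which gives $d\geq\frac{\ln(1/\epsilon)}{-\ln(1-1/(n-2))}$. Applying the elementary bound $-\ln(1-x)>x$ for $x\in(0,1)$ to $x=1/(n-2)$ shows that $d\geq(n-2)\ln(1/\epsilon)$ already suffices, and rounding up together with the slack factor between $n-2$ and $n$ yields the stated $d<n\ln(1/\epsilon)$ for $\epsilon\leq 1/e$. The main obstacles I anticipate are twofold: first, formally checking that $k=n-2$ is the worst case, since fewer malicious players yields strictly more safe partners for any fixed honest target, making the probability of being unsafe in every occurrence decrease in the number of honest players; and second, justifying the independence of the orderings across occurrences, which is the usual random-oracle idealisation of the pseudo-random seed described in Section~\ref{sec:randomrings}.
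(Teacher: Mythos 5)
Your proposal is correct and follows essentially the same route as the paper: identify $k=n-2$ as the worst case, observe that the two remaining honest players are safe once they are paired together in some occurrence, bound the failure probability by a $d$-th power, and invert via $-\ln(1-x)>x$. The only difference is that you use the exact pairing probability $\tfrac{1}{n-2}$ (consistent with the paper's own average-case analysis) where the paper's worst-case proof uses the slightly pessimistic $\tfrac{1}{n-1}$; both yield $d<n\ln(1/\epsilon)$.
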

\begin{proof}
There are at least $2$ non-malicious players, otherwise the
dot-product reveals the secrets in any case.
Any given non-malicious player is safe from any attacks if in at least
one repetition he was paired with another non-malicious player.
In the worst case, $k=n-2$ players are malicious and the latter
event arises with probability $(1-\frac{1}{n-1})^d$ for $d$
repetitions. 
If $d\geq{}n\left(\ln\left(\epsilon^{-1}\right)\right)$,
then
$d>(n-1)(-\ln{\epsilon})>\frac{\ln\epsilon}{\ln\left(1-\frac{1}{n-1}\right)}$,
which shows that $(1-\frac{1}{n-1})^d<\epsilon$.
\end{proof}
Overall, the wiretap variant of Algorithm~\ref{alg:wiretap} can
guaranty any security, at the cost of repeating the protocol. 
As the number of repetitions is fixed at the beginning by all the
players, all these repetitions can occur in parallel. Therefore, the
overall volume of communication is multiplied by the number of
repetitions, while the number of rounds remains constant.
This is summarized in Table~\ref{tab:comcomp} and
Figure~\ref{fig:timcomp}, for the average
(Theorem~\ref{thm:wiretap:average}) and worst
(Proposition~\ref{prop:wiretap:worst}) cases of Algorithm~\ref{alg:wiretap},
and where the protocols of the previous sections are also compared.

\begin{table}[ht]\small
\caption{Communication complexities}\label{tab:comcomp}
\centering{
\begingroup
\setlength{\tabcolsep}{0.5em} %
\renewcommand*{\arraystretch}{1.2} %
\begin{tabular}{|l||c|c|c|}
\hline
Protocol & Volume & Rounds & Paillier\\
\hline
MPWP&\bigO{n^3}& \bigO{n} & \xmark\\
P-MPWP~(\S~\ref{sec:p-mpwp})& $n^{2+o(1)}$& \bigO{n} & \cmark \\
Alg.~\ref{alg:wiretap} (Wiretap) & $n^{2+o(1)}\ln\left(\frac{1}{\epsilon}\right)$ & 5 & \cmark \\
\hline
Alg.~\ref{tn_protocol} (DSDP$_n$)& $n^{1+o(1)}$ & \bigO{n} & \cmark \\
Alg.~\ref{mn_protocol} (PDSMM$_n$) & $n^{1+o(1)}$ & 5 & \cmark \\
Alg.~\ref{alg:wiretap} (Average) & $n^{1+o(1)}$ & 5 & \cmark \\
\hline
\end{tabular}
\endgroup

}
\end{table}

\begin{figure}[ht]
\includegraphics[width=\columnwidth]{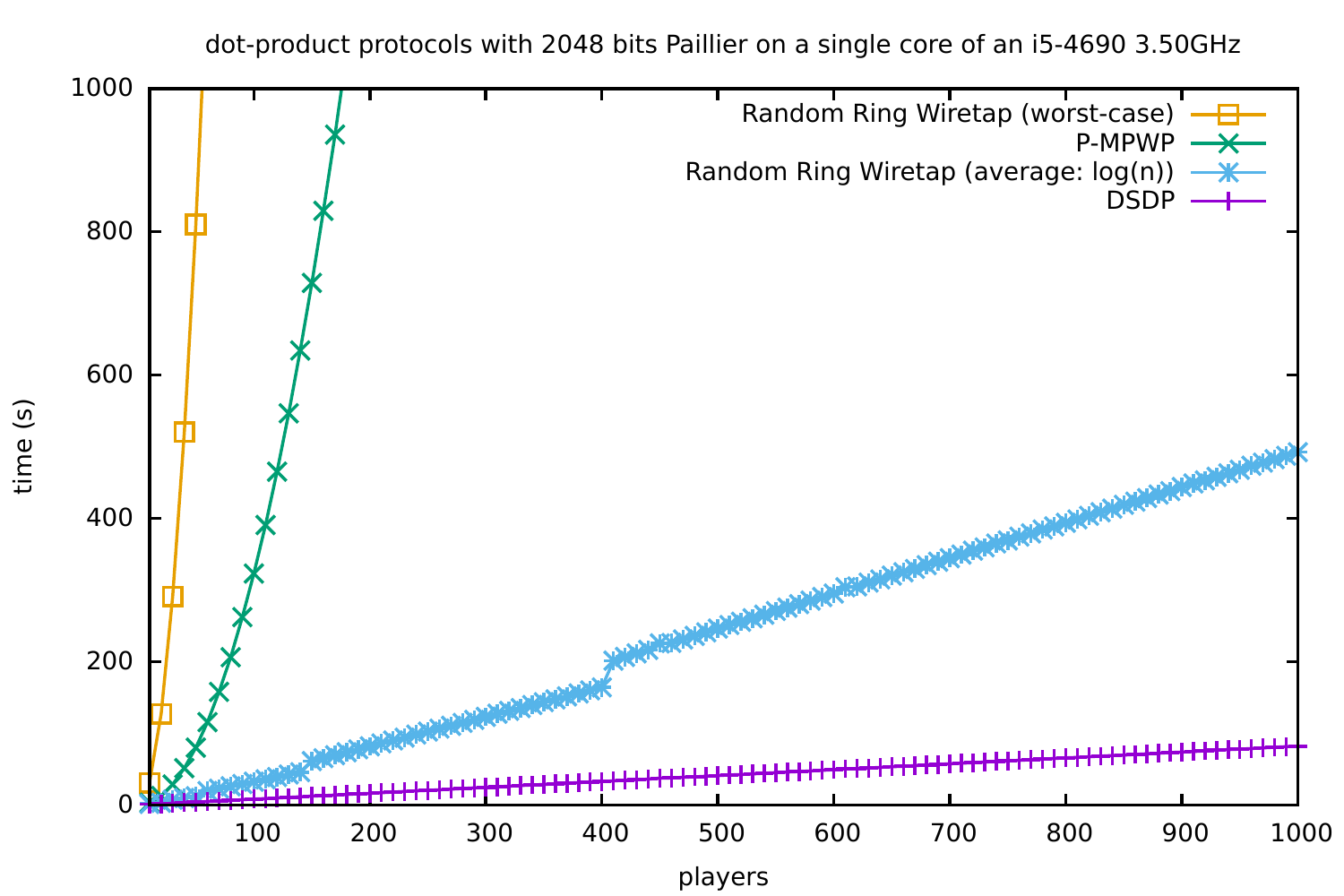}
\caption{Quadratic and linear protocols timings}\label{fig:timcomp}
\end{figure}
On the one hand, we see in Figure~\ref{fig:timcomp} that quadratic
protocols, with homomorphic encryption, are not  usable for a
realistic large group of players (trust aggregation could be used for
instance by certificate authorities, and there are several hundreds of
those in current operating systems or web browsers).
On the other hand, quasi linear time protocols present good
performance, while preserving some reasonable security
properties: the average wiretap curve is on average sufficient to
prevent any attack and still has a quasi linear asymptotic behavior.
The steps in this curve are the rounding of $\log(n)$ to the
next integer and correspond to one more random ring wiretap round.

\section{\uppercase{Conclusion: MPC of Trust}}\label{sec:mpcmonoid}
We now come back to the aggregation of trust.
As shown in Section~\ref{sec:monoid}, the first step is to reduce the
computation to that of dot-products. 
We show how to fully adapt the protocol of
Section~\ref{sec:protocol} to the evaluation of trust values with
parallel and sequential aggregations:
\begin{corollary} The protocol $DSDP$ of Algorithm~\ref{tn_protocol} can be
  applied on trust values, provided that the random values $r_i$ are
  invertible for $\paragg$. 
\end{corollary}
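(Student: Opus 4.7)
The plan is to replay the proof of correctness of Algorithm~\ref{tn_protocol} step by step, but interpreting scalar addition as $\paragg$ and scalar multiplication as $\seqagg$, and then to identify precisely where the invertibility hypothesis is used.

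First I would rewrite the protocol with trust values $\couple{a}{b}\in\T$ in place of scalars. The lemma on homomorphic aggregation already tells us that $\HMul{E(\couple{v_i}{v_i'})}{\couple{u_i}{u_i'}}$ yields an encryption of $\couple{v_i}{v_i'}\seqagg\couple{u_i}{u_i'}$, and that $\HAdd{E(\cdot)}{\couple{\cdot}{\cdot}}$ yields an encryption of the $\paragg$-aggregate, provided one of the two operands is in clear. Since Alice supplies both the $u_i$ (in clear) and the random masks $r_i$ (in clear, chosen by herself), every homomorphic step of Algorithm~\ref{tn_protocol} translates verbatim: each $\alpha_i = \HAdd{\HMul{c_i}{\couple{u_i}{u_i'}}}{r_i}$ becomes an encryption of $\bigl(\couple{v_i}{v_i'}\seqagg\couple{u_i}{u_i'}\bigr)\paragg r_i$, and each $\beta_i$ accumulates one more $\paragg$-term.

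Next I would redo the telescoping that Alice performs in line~\ref{prot:return}. In the scalar protocol she removes $\sum_i r_i$ to recover $\sum_i u_iv_i$; in the trust version she receives (after decryption of $\gamma$) the value $\bigparagg_{i=2}^n \bigl(\couple{v_i}{v_i'}\seqagg\couple{u_i}{u_i'}\bigr)\paragg r_i$ and must strip off each mask $r_i$. This is exactly the step that requires the $r_i$ to be invertible for $\paragg$: Alice right-$\paragg$-composes with $r_i^{-1}$ for $i=2,\dots,n$, after which what remains is $\bigparagg_{i=2}^n \couple{v_i}{v_i'}\seqagg\couple{u_i}{u_i'}$, and she finally $\paragg$-aggregates her own term $\couple{v_1}{v_1'}\seqagg\couple{u_1}{u_1'}$. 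Here I would also remark that the base-case mask, corresponding to the scalar $0$, must be replaced by the $\paragg$-neutral element $\couple{0}{1}$ (this was identified in the proof of the invertibility lemma), so that initialising the accumulator does not perturb the result.

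The main obstacle is really just that single invertibility requirement: everything else is a mechanical transcription made legitimate by the homomorphic lemma on pairs. To make the corollary operational, I would add a brief remark that by the earlier invertibility lemma the condition on $r_i=\couple{a_i}{b_i}$ amounts to ``$b_i$ invertible in $\D$ and either $a_i=0$ or $a_i-1$ invertible,'' which is trivially satisfiable by sampling $a_i=0$ and $b_i$ uniformly in $\D^\times$; this preserves the masking quality since $r_i$ still ranges over a large set, so the semi-honest security proof of Section~\ref{sec:proofs} carries over unchanged.
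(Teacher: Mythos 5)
Your proposal is correct and follows essentially the same route as the paper: replace all scalars by couples, invoke the homomorphic lemma on pairs so that each $\alpha_i$ encrypts $(u_i\seqagg v_i)\paragg r_i$ and each $\beta_i$ accumulates one more $\paragg$-term, and observe that the only genuinely new requirement is that Alice can strip the masks by $\paragg$-composing with the $r_i^{-1}$ (using commutativity of $\paragg$), which is exactly where the invertibility hypothesis enters. Your added remarks on the neutral element $\couple{0}{1}$ and on how to sample invertible $r_i$ are consistent with the paper's earlier invertibility lemma and its later treatment of line~\ref{sinit}, but they do not change the argument.
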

\begin{proof}
\begin{compactitem}
\item $u_i$, $v_i$, $r_i$, $c_i$, $\alpha_i$, $\beta_i$,
  $\Delta_i$, $\gamma$ are now couples;
\item Encryption and decryption ($E(v_i)$, $D(\beta_i)$,
  $E(\Delta_i)$, $E(\gamma)$, etc.) now apply on couples, using the
morphism~$E(\couple{a}{b})=\couple{E(a)}{E(b)}$;
\item $\alpha_i$ is  $E((u_i\seqagg{}v_i)\paragg{}r_i)=\HAdd{\HMul{E(v_i)}{u_i}}{r_i}$, and can
  still be computed by $P_1$, since $c_i=E(v_i)$ and $u_i$ and $r_i$
  are known to him;
\item Similarly,
  $\beta_i=E(\alpha_i\paragg{}\Delta_i)=\HAdd{E(\alpha_i)}{\Delta_i}$.
\item Finally, as \paragg is commutative, $S$ is recovered by adding
  the inverses for $\paragg$ of the $r_i$.
\end{compactitem}
\end{proof}

From~\cite[Definition~11]{jgd:2012:pkitrust},
the $d$-aggregation of trust is a dot-product but slightly modified
to {\em not} include the value $u_1v_1$. Therefore at
line~\ref{sinit}, in the protocol of
Algorithm~\ref{mn_protocol}, it suffices to set $s$ to the neutral
element of \paragg (that is $s\leftarrow\couple{0}{1}$, instead of
$s\leftarrow{}a_{i,j}b_{i,j}$).
There remains to encode trust values that are proportions, in $[0,1]$,
into $\D=\ZN$.
With $n$ participants, we use a fixed precision $2^{-p}$ such that
$2^{n(2p+1)}<N\leq{}2^{n(2(p+1)+1)}$ and round the trust
coefficients to $\lfloor{}x2^p\rfloor\mod N$ from $[0,1] \rightarrow
\D$.
Then the dot-product can be bounded as follows:
\begin{lemma}\label{lem:precision}
If each coefficient of the $u_i$ and $v_i$ are between $0$ and
$2^{p}-1$, then the coefficients of $S=\paragg_{i=1}^n
(u_i\seqagg{}v_i)$ are bounded by $2^{n(2p+1)}$ in absolute value.
\end{lemma}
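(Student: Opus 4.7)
My plan is to bound the two coordinates of $S=\couple{P}{Q}$ independently, exploiting the explicit formulas for $\seqagg$ and $\paragg$. Writing $u_i=\couple{a_i}{b_i}$ and $v_i=\couple{c_i}{d_i}$, the definition gives $u_i\seqagg v_i=\couple{a_ic_i+b_id_i}{a_id_i+b_ic_i}$, and the hypothesis $a_i,b_i,c_i,d_i\in[0,2^p-1]$ immediately yields that both coordinates of $u_i\seqagg v_i$ are non-negative integers bounded by $2(2^p-1)^2<2^{2p+1}$. I will let $B=2^{2p+1}$ and write $x_i,y_i$ for the first and second coordinates of $u_i\seqagg v_i$, so $x_i,y_i<B$.

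For $Q$, the parallel aggregation acts as a straight product on the second coordinate, so $Q=\prod_{i=1}^n y_i\le B^n=2^{n(2p+1)}$ with no further work. This disposes of half the lemma.

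The first coordinate is the more delicate half, because $\paragg$ can produce negative first coordinates even when the inputs $x_i$ are non-negative (e.g.\ $3+3-9<0$), so a naive induction on $|P|$ blows up from the cancellation inside $a+c-ac$. The trick I will use is the algebraic identity $1-(a+c-ac)=(1-a)(1-c)$: iterating it turns the $\paragg$ recursion into a telescoping product, giving $1-P=\prod_{i=1}^n(1-x_i)$. From there, $|P|\le 1+\prod_i|1-x_i|\le 1+(B-1)^n$, and the elementary estimate $(B-1)^n+1\le B^n$ (valid for any $B\ge 2$, since $B^n-(B-1)^n\ge 1$) finishes the bound $|P|\le 2^{n(2p+1)}$.

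The only real obstacle is spotting the ``$1-\cdot$'' identity; once that is in hand, the argument is a short computation on the two coordinates followed by a binomial estimate. Everything rests on the hypothesis being stated in terms of $2^p-1$ rather than $2^p$, which is what makes the single-step bound $2(2^p-1)^2<2^{2p+1}$ strict and leaves enough slack for the ``$+1$'' in the first-coordinate bound.
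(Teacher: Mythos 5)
Your proof is correct, and for the first coordinate it takes a genuinely different route from the paper's. The paper argues by a single induction on the number $k$ of aggregated terms, maintaining the invariant that both coordinates stay below $2^{k(2p+1)}-1$ in absolute value; the inductive step for the first coordinate rests on the triangle-inequality estimate $|a+c-ac|\le |a|+|c|+|a||c|=(1+|a|)(1+|c|)-1$, which is exactly the ``inequality shadow'' of your identity $1-(a+c-ac)=(1-a)(1-c)$. You instead push that identity all the way to a closed form, $1-P=\prod_{i=1}^n(1-x_i)$, and only then take absolute values; the second coordinate you treat directly as a plain product. What your version buys is transparency about the cancellation in $a+c-ac$: it shows the exact multiplicative structure of $\paragg$ on the first coordinate (the conjugation $x\mapsto 1-x$ turning $\paragg$ into multiplication, which incidentally also explains the invertibility criterion in the paper's first lemma), and it cleanly separates the two coordinates. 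What the paper's version buys is brevity and uniformity --- one induction covers both coordinates at once --- at the cost of hiding why the bound does not degrade despite possible negative intermediate values. Your accounting of the slack is also sound: $x_i,y_i\le 2(2^p-1)^2<2^{2p+1}$ gives $|1-x_i|\le B-1$ with $B=2^{2p+1}$, and $(B-1)^n+1\le B^n$ absorbs the extra $1$.
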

\begin{proof}
For all $u,v$, the coefficients of $(u\seqagg{}v)$ are between $0$ and
$(2^p-1)(2^p-1)+(2^p-1)(2^p-1)=2^{2p+1}-2^{p+2}+2<2^{2p+1}-1$ for
$p$ a positive integer.
Then, by induction, when aggregating $k$ of those with \paragg, the
absolute values of the 
coefficients remain less than $2^{k(2p+1)}-1$.
\end{proof}

Therefore, with $N$ an $2048$ bits modulus and
$n\leq{}4$ in the $ESDP$ protocols of Algorithm~\ref{mn_protocol},
Lemma~\ref{lem:precision} allows a precision close to $2^{-255}\approx{}10^{-77}$.

In conclusion, we provide an efficient and secure protocol $DSDP_n$ to
securely compute dot products (against semi-honest adversary) in the
MPC model, with unsual data division between $n$ players. It can be
used to perform a private matrix multiplication and also be adapted to
securely compute trust aggregation between players.

\bibliographystyle{plainurl}
\bibliography{smpmatmul}

\begin{thebibliography}{10}

\bibitem{Michalas2012}
Vulnerabilities of decentralized additive reputation systems regarding the
  privacy of individual votes.
\newblock {\em Wireless Personal Commnications}, 66(3):559--575, 2012.
\newblock \href {http://dx.doi.org/10.1007/s11277-012-0734-z}
  {\path{doi:10.1007/s11277-012-0734-z}}.

\bibitem{Amirbekyan:2007:dotprod}
Artak Amirbekyan and Vladimir Estivill-Castro.
\newblock A new efficient privacy-preserving scalar product protocol.
\newblock In {\em AusDM 2007}, volume~70 of {\em CRPIT}, pages 209--214, 2007.
\newblock URL: \url{http://crpit.com/confpapers/CRPITV70Amirbekyan.pdf}.

\bibitem{Batir:2011:harmonic}
Necdet Batir.
\newblock Sharp bounds for the psi function and harmonic numbers.
\newblock {\em Mathematical inequalities and applications}, 14(4), 2011.
\newblock \href {http://dx.doi.org/10.7153/mia-14-77}
  {\path{doi:10.7153/mia-14-77}}.

\bibitem{Ben-Or:1988}
Michael Ben-Or, Shafi Goldwasser, and Avi Wigderson.
\newblock Completeness theorems for non-cryptographic fault-tolerant
  distributed computation.
\newblock STOC'88, pages 1--10. ACM, 1988.
\newblock URL: \url{http://doi.acm.org/10.1145/62212.62213}, \href
  {http://dx.doi.org/10.1145/62212.62213} {\path{doi:10.1145/62212.62213}}.

\bibitem{Benaloh94}
Josh Benaloh.
\newblock Dense probabilistic encryption.
\newblock In {\em First Annual Workshop on Selected Areas in Cryptography},
  pages 120--128, Kingston, ON, May 1994.
\newblock URL: \url{http://sacworkshop.org/proc/SAC_94_006.pdf}.

\bibitem{Bendlin:2011:eurocrypt}
Rikke Bendlin, Ivan Damgård, Claudio Orlandi, and Sarah Zakarias.
\newblock Semi-homomorphic encryption and multiparty computation.
\newblock In KennethG. Paterson, editor, {\em Advances in Cryptology –
  EUROCRYPT 2011}, volume 6632 of {\em Lecture Notes in Computer Science},
  pages 169--188. Springer Berlin Heidelberg, 2011.
\newblock URL: \url{http://dx.doi.org/10.1007/978-3-642-20465-4_11}, \href
  {http://dx.doi.org/10.1007/978-3-642-20465-4_11}
  {\path{doi:10.1007/978-3-642-20465-4_11}}.

\bibitem{blanchet01}
B.~Blanchet.
\newblock An efficient cryptographic protocol verifier based on prolog rules.
\newblock In {\em Proc.\ {CSFW}'01}, pages 82--96. IEEE Comp. Soc. Press, 2001.

\bibitem{ProVerif}
B.~Blanchet.
\newblock {\em Cryptographic Protocol Verifier User Manual}, 2004.
\newblock URL:
  \url{http://www.di.ens.fr/~blanchet/crypto/proverif-manual.ps.gz}.

\bibitem{Chaum86}
David Chaum, Jan{-}Hendrik Evertse, Jeroen van~de Graaf, and Ren{\'{e}}
  Peralta.
\newblock Demonstrating possession of a discrete logarithm without revealing
  it.
\newblock In Andrew~M. Odlyzko, editor, {\em Advances in Cryptology - {CRYPTO}
  '86, Santa Barbara, California, USA, 1986, Proceedings}, volume 263 of {\em
  Lecture Notes in Computer Science}, pages 200--212. Springer, 1986.
\newblock URL: \url{http://dx.doi.org/10.1007/3-540-47721-7_14}, \href
  {http://dx.doi.org/10.1007/3-540-47721-7_14}
  {\path{doi:10.1007/3-540-47721-7_14}}.

\bibitem{Damgard:2012}
Ivan Damgård, Valerio Pastro, Nigel Smart, and Sarah Zakarias.
\newblock Multiparty computation from somewhat homomorphic encryption.
\newblock In Reihaneh Safavi-Naini and Ran Canetti, editors, {\em Advances in
  Cryptology – CRYPTO 2012}, volume 7417 of {\em Lecture Notes in Computer
  Science}, pages 643--662. Springer Berlin Heidelberg, 2012.
\newblock URL: \url{http://dx.doi.org/10.1007/978-3-642-32009-5_38}, \href
  {http://dx.doi.org/10.1007/978-3-642-32009-5_38}
  {\path{doi:10.1007/978-3-642-32009-5_38}}.

\bibitem{Delaune:2006:URA:1221599.1222026}
St{\'e}phanie Delaune.
\newblock An undecidability result for agh.
\newblock {\em Theor. Comput. Sci.}, 368(1-2):161--167, December 2006.
\newblock URL: \url{http://dx.doi.org/10.1016/j.tcs.2006.08.018}, \href
  {http://dx.doi.org/10.1016/j.tcs.2006.08.018}
  {\path{doi:10.1016/j.tcs.2006.08.018}}.

\bibitem{Dolev:2010:CMT}
Shlomi Dolev, Niv Gilboa, and Marina Kopeetsky.
\newblock Computing multi-party trust privately: in {$O(n)$} time units sending
  one (possibly large) message at a time.
\newblock In {\em Proceedings of the 2010 ACM Symposium on Applied Computing},
  SAC '10, pages 1460--1465, New York, NY, USA, 2010. ACM.

\bibitem{Du:2001:PCS}
Wenliang Du and M.~J. Atallah.
\newblock Privacy-preserving cooperative statistical analysis.
\newblock In {\em Proceedings of the 17th Annual Computer Security Applications
  Conference}, ACSAC '01, pages 102--110, December 2001.
\newblock \href {http://dx.doi.org/10.1109/ACSAC.2001.991526}
  {\path{doi:10.1109/ACSAC.2001.991526}}.

\bibitem{Du:2002}
Wenliang Du and Zhijun Zhan.
\newblock A practical approach to solve secure multi-party computation
  problems.
\newblock In {\em Proceedings of the 2002 Workshop on New Security Paradigms},
  NSPW '02, pages 127--135, New York, NY, USA, 2002. ACM.
\newblock URL: \url{http://doi.acm.org/10.1145/844102.844125}, \href
  {http://dx.doi.org/10.1145/844102.844125} {\path{doi:10.1145/844102.844125}}.

\bibitem{jgd:2012:pkitrust}
Jean-Guillaume Dumas and Hicham Hossayni.
\newblock Matrix powers algorithm for trust evaluation in {PKI} architectures.
\newblock In Audun J{\o}sang, Pierangela Samarati, and Marinella Petrocchi,
  editors, {\em STM'2012, Proceedings of the eigth International Workshop on
  Security and Trust Management (co-ESORICS 2012), Pisa, Italy}, volume 7783 of
  {\em Lecture Notes in Computer Science}, pages 129--144, September 2012.
\newblock URL: \url{http://hal.archives-ouvertes.fr/hal-00607478}, \href
  {http://dx.doi.org/10.1007/978-3-642-38004-4_9}
  {\path{doi:10.1007/978-3-642-38004-4_9}}.

\bibitem{Foley:2010:stm}
Simon~N. Foley, Wayne~Mac Adams, and Barry O'Sullivan.
\newblock Aggregating trust using triangular norms in the keynote trust
  management system.
\newblock In Jorge Cu{\'{e}}llar, Javier Lopez, Gilles Barthe, and Alexander
  Pretschner, editors, {\em Security and Trust Management - 6th International
  Workshop, {STM} 2010, Athens, Greece, September 23-24, 2010, Revised Selected
  Papers}, volume 6710 of {\em Lecture Notes in Computer Science}, pages
  100--115. Springer, 2010.
\newblock URL: \url{http://dx.doi.org/10.1007/978-3-642-22444-7_7}, \href
  {http://dx.doi.org/10.1007/978-3-642-22444-7_7}
  {\path{doi:10.1007/978-3-642-22444-7_7}}.

\bibitem{Fousse:2011:benaloh}
Laurent Fousse, Pascal Lafourcade, and Mohamed Alnuaimi.
\newblock Benaloh's dense probabilistic encryption revisited.
\newblock In Abderrahmane Nitaj and David Pointcheval, editors, {\em Progress
  in Cryptology - {AFRICACRYPT} 2011 - 4th International Conference on
  Cryptology in Africa, Dakar, Senegal, July 5-7, 2011. Proceedings}, volume
  6737 of {\em Lecture Notes in Computer Science}, pages 348--362. Springer,
  2011.
\newblock URL: \url{http://dx.doi.org/10.1007/978-3-642-21969-6_22}, \href
  {http://dx.doi.org/10.1007/978-3-642-21969-6_22}
  {\path{doi:10.1007/978-3-642-21969-6_22}}.

\bibitem{Goethals:05}
Bart Goethals, Sven Laur, Helger Lipmaa, and Taneli Mielik\"ainen.
\newblock On private scalar product computation for privacy-preserving data
  mining.
\newblock In Choon-sik Park and Seongtaek Chee, editors, {\em Information
  Security and Cryptology ICISC 2004}, volume 3506 of {\em Lecture Notes in
  Computer Science}, pages 104--120. Springer Berlin Heidelberg, 2005.
\newblock URL: \url{http://dx.doi.org/10.1007/11496618_9}, \href
  {http://dx.doi.org/10.1007/11496618_9} {\path{doi:10.1007/11496618_9}}.

\bibitem{Guha:2004:PTD}
Ramanathan~V. Guha, Ravi Kumar, Prabhakar Raghavan, and Andrew Tomkins.
\newblock Propagation of trust and distrust.
\newblock In Stuart~I. Feldman, Mike Uretsky, Marc Najork, and Craig~E. Wills,
  editors, {\em Proceedings of the 13th international conference on World Wide
  Web, {WWW} 2004, New York, NY, USA, May 17-20, 2004}, pages 403--412. {ACM},
  2004.
\newblock URL: \url{http://doi.acm.org/10.1145/988672.988727}, \href
  {http://dx.doi.org/10.1145/988672.988727} {\path{doi:10.1145/988672.988727}}.

\bibitem{Huang:2010:FCT}
Jingwei Huang and David~M. Nicol.
\newblock A formal-semantics-based calculus of trust.
\newblock {\em {IEEE} Internet Computing}, 14(5):38--46, 2010.
\newblock URL: \url{http://doi.ieeecomputersociety.org/10.1109/MIC.2010.83},
  \href {http://dx.doi.org/10.1109/MIC.2010.83}
  {\path{doi:10.1109/MIC.2010.83}}.

\bibitem{Josang:2007:PLUU}
Audun J{\o}sang.
\newblock Probabilistic logic under uncertainty.
\newblock In Joachim Gudmundsson and C.~Barry Jay, editors, {\em Theory of
  Computing 2007. Proceedings of the Thirteenth Computing: The Australasian
  Theory Symposium {(CATS2007).} January 30 - Febuary 2, 2007, Ballarat,
  Victoria, Australia, Proceedings}, volume~65 of {\em {CRPIT}}, pages
  101--110. Australian Computer Society, 2007.
\newblock URL: \url{http://crpit.com/abstracts/CRPITV65Josang.html}.

\bibitem{LP15}
Pascal Lafourcade and Maxime Puys.
\newblock Performance evaluations of cryptographic protocols. verification
  tools dealing with algebraic properties.
\newblock In {\em FPS 2015}, 2015.

\bibitem{Lindell:09}
Yehuda Lindell.
\newblock Secure computation for privacy preserving data mining.
\newblock In John Wang, editor, {\em Encyclopedia of Data Warehousing and
  Mining, Second Edition {(4} Volumes)}, pages 1747--1752. {IGI} Global, 2009.
\newblock URL:
  \url{http://www.igi-global.com/Bookstore/Chapter.aspx?TitleId=11054}, \href
  {http://dx.doi.org/10.4018/978-1-60566-010-3}
  {\path{doi:10.4018/978-1-60566-010-3}}.

\bibitem{Mohassel:2011:seclinalg}
Payman Mohassel.
\newblock Efficient and secure delegation of linear algebra.
\newblock {\em {IACR} Cryptology ePrint Archive}, 2011:605, 2011.
\newblock URL: \url{http://eprint.iacr.org/2011/605}.

\bibitem{Ozarov:1984:wiretapII}
Lawrence~H. Ozarow and Aaron~D. Wyner.
\newblock Wire-tap channel {II}.
\newblock In Thomas Beth, Norbert Cot, and Ingemar Ingemarsson, editors, {\em
  EUROCRYPT'84, Paris, France}, volume 209 of {\em LNCS}, pages 33--50.
  Springer, 1984.
\newblock URL: \url{http://dx.doi.org/10.1007/3-540-39757-4_5}, \href
  {http://dx.doi.org/10.1007/3-540-39757-4_5}
  {\path{doi:10.1007/3-540-39757-4_5}}.

\bibitem{Paillier:1999:homomorph}
Pascal Paillier.
\newblock Public-key cryptosystems based on composite degree residuosity
  classes.
\newblock In Jacques Stern, editor, {\em Advances in Cryptology - {EUROCRYPT}
  '99, International Conference on the Theory and Application of Cryptographic
  Techniques, Prague, Czech Republic, May 2-6, 1999, Proceeding}, volume 1592
  of {\em Lecture Notes in Computer Science}, pages 223--238. Springer, 1999.
\newblock URL: \url{http://dx.doi.org/10.1007/3-540-48910-X_16}, \href
  {http://dx.doi.org/10.1007/3-540-48910-X_16}
  {\path{doi:10.1007/3-540-48910-X_16}}.

\bibitem{Shamir:1979}
Adi Shamir.
\newblock How to share a secret.
\newblock {\em Commun. ACM}, 22(11):612--613, November 1979.
\newblock URL: \url{http://doi.acm.org/10.1145/359168.359176}, \href
  {http://dx.doi.org/10.1145/359168.359176} {\path{doi:10.1145/359168.359176}}.

\bibitem{Wang:2009}
I-Cheng Wang, Chih hao Shen, Tsan sheng Hsu, Churn-Chung Liao, Da-Wei Wang, and
  J.~Zhan.
\newblock Towards empirical aspects of secure scalar product.
\newblock In {\em Information Security and Assurance, 2008. ISA 2008.
  International Conference on}, pages 573--578, April 2008.
\newblock \href {http://dx.doi.org/10.1109/ISA.2008.78}
  {\path{doi:10.1109/ISA.2008.78}}.

\bibitem{Yao:1982}
Andrew~C. Yao.
\newblock Protocols for secure computations.
\newblock {\em 2013 IEEE 54th Annual Symposium on Foundations of Computer
  Science}, 0:160--164, 1982.
\newblock \href
  {http://dx.doi.org/http://doi.ieeecomputersociety.org/10.1109/SFCS.1982.88}
  {\path{doi:http://doi.ieeecomputersociety.org/10.1109/SFCS.1982.88}}.

\end{thebibliography}
\end{document}